%
% LaTeX template for prepartion of submissions to PLDI'15
%
% Requires temporary version of sigplanconf style file provided on
% PLDI'15 web site.
% 
\documentclass[10pt]{sigplanconf}

%
% the following standard packages may be helpful, but are not required
%
%\usepackage{SIunits}            % typset units correctly
\usepackage{mathptmx}
\DeclareSymbolFont{epsilon}{OML}{ntxmi}{m}{it}
\DeclareMathSymbol{\epsilon}{\mathord}{epsilon}{"0F}

\usepackage{flushend}
\usepackage{courier}            % standard fixed width font
\usepackage[scaled]{helvet} % see www.ctan.org/get/macros/latex/required/psnfss/psnfss2e.pdf
\usepackage{url}                  % format URLs
\usepackage{listings}          % format code
\usepackage{enumitem}      % adjust spacing in enums
\usepackage[hidelinks=true,breaklinks,draft=false]{hyperref}   % hyperlinks, including DOIs and URLs in bibliography
% known bug: http://tex.stackexchange.com/questions/1522/pdfendlink-ended-up-in-different-nesting-level-than-pdfstartlink

\usepackage{pslatex}
\usepackage{algpseudocode}
\usepackage{algorithmicx}
\usepackage{stmaryrd}
\usepackage{xspace}

\usepackage{xspace}
\usepackage{graphicx}
\usepackage{proof}
\usepackage{amsmath}
\usepackage{amsthm}
\usepackage{amssymb}
\usepackage{float}
%\usepackage{latexsym}

% use latin modern font
\usepackage[T1]{fontenc}
\usepackage[kerning,spacing]{microtype}
 
%\floatstyle{boxed}
%\restylefloat{figure}

\newcommand{\hide}[1]{} 
\newcommand{\C}[1]{\lstinline!#1!}

\newcommand{\ie}{\textit{i.e.}}

% References
%
\newtheorem{thm}{Theorem}
\newtheorem{lem}{Lemma}

%\newcommand{\algref}[1]{Algorithm~\ref{alg:#1}}
%

%

% Determine type of references for particular document.
%

% Numbering layout
%
\numberwithin{equation}{section}

                     %%% Commands for formatting code %%%

              %%% Commands for formatting the language grammar %%%

%%% Commands for inserting special characters %%%

% \bs -- used to create a monospace backslash

%i \us -- used to create a monospace underscore (works better than {\tt\_})

\usepackage{listings}
\lstdefinelanguage{jelf}{
  morekeywords = {
       bool, true, false
     , filter, null
     , concretize, create
     , do
     , else, empty, error, exn
     , facet, flowsto, for, from, function
     , get
     , high
     , if, in, input, int, float, bool, string
     , join
     , label, let, length, list, low
     , map, mkLabel, mkSensitive
     , nil, null, none
     , option, or, orderby
     , policy, print, public
     , record, ref, respond, restrict
     , save, schema, select, send_mail
     , then, this
     , unit
     , where, with },
  morecomment=[s]{/*}{*/},
  literate=
    {==}{{$=\!\!=$} }1
}
\lstset{
  language=jelf,
  columns=full,
  basicstyle=\fontfamily{lmss}\selectfont\footnotesize,
  numbers=none,
  numbersep=3pt,
  numberstyle=\tiny,
  stepnumber=1,
  tabsize=2,      
  breaklines=false,  
  breakatwhitespace=true,
  mathescape=true,
  showstringspaces=false,
  morekeywords={jhas, mkLabel, mkSensitive, restrict, print, low, high}
}

\usepackage{color}
\definecolor{dkgreen}{rgb}{0,0.3,0}
\definecolor{gray}{rgb}{0.5,0.5,0.5}
\definecolor{mauve}{rgb}{0.58,0,0.82}
 
\usepackage{url}

\renewcommand{\t}[1]{\text{#1}}

\newcommand{\ors}{\ensuremath{\ |\ \ }}

\newcommand{\code}[1]{\lstinline[language=jelf,columns=fullflexible,basicstyle=\fontfamily{lmss}\selectfont\normalsize]{#1}}  % This _must_ be {\tt#1}, not \texttt{#1}
\newcommand{\python}[1]{\lstinline[language=Python,columns=fullflexible,basicstyle=\fontfamily{lmss}\selectfont\normalsize]{#1}}
\newcommand{\sql}[1]{\lstinline[language=SQL,columns=fullflexible,basicstyle=\fontfamily{lmss}\selectfont\normalsize]{#1}}

\newcommand{\defeq}{\stackrel{\mathrm{def}}{=}}
\newcommand{\lam}[2]{\ensuremath{\lambda #1 .\hspace{0.01em} #2}}
\newcommand{\labexpr}[3]{\langle {#1}~?~{#2}:{#3} \rangle}
\newcommand{\fun}[1]{{\mathit {#1}}}

% misc commands

% labels
\newcommand{\pc}{\fun{pc}}

% source lang

\newcommand{\uv}{V}

\newcommand{\UValue}{\fun{Val}}
\newcommand{\URawValue}{\fun{RawValue}}
\newcommand{\ULabel}{\fun{Label}}
\newcommand{\USignedLabel}{\fun{Branch}}
\newcommand{\USignedLabels}{\fun{Branches}}

\newcommand{\UString}{\fun{String}}

% faceted value semantics
\newcommand{\labval}[3]{\langle {#1}~?~{#2}:{#3} \rangle}
\newcommand{\labResult}[3]{\langle\!\langle\, {#1}~?~{#2}:{#3} \,\rangle\!\rangle}

\newcommand{\lamc}[2]{(\lambda {#1}.{#2})}
\newcommand{\sstep}[5]{{#1},{#2} \Downarrow_{#3} {#4},{#5}}

\newcommand{\stmtstep}[4]{{#1},{#2} \Downarrow {#3},{#4}}
\newcommand{\mrule}[3]{
  \multicolumn{1}{l}{\rel{#1}}\\
  \frac{\strut\begin{array}{@{}c@{}} #2 \end{array}}
       {\strut\begin{array}{@{}c@{}} #3 \end{array}}
   \\~\\
}
\newcommand{\mmrule}[3]{
  \frac{\strut\begin{array}{@{}c@{}} #2 \end{array}}
       {\strut\begin{array}{@{}c@{}} #3 \end{array}}
  &~~{\rel{#1}}
   \\~\\
}

\newcommand{\notlab}[1]{\neg #1}

\newcommand{\joinLab}[2]{#2 \cup\{#1\} }

% Commands for fv semantics

% Types for functions

%label types

\newcommand{\jtable}[1]{\t{table} \ #1}
\newcommand{\jrow}[1]{\overline{#1}}

\newcommand{\visible}[2]{#1 \sim #2}
\newcommand{\nvisible}[2]{#1 \not\sim #2}

%%%%%%%%%%%%%%%%%%%%%%%%%%%%%%%%%%%%%%%%%%%%%%%%%%
%\newcommand{\nstep}[4]{{#1},{#2} \downarrow {#3},{#4}}
%\newcommand{\setuprulecol}{\hspace{2.5in} \\[-.2in]}
\newcommand{\setuprulecol}{\hspace{0.5in} \\[-.2in]}

\newcommand{\commentOut}[1]{}

\newcommand{\equivlab}[1]{\sim_{#1}}
\newcommand{\ctxthole}{~\bullet~}

\usepackage{tikz}
\usetikzlibrary{calc}
\usetikzlibrary{shapes}
\usetikzlibrary{arrows}

\newcommand{\dblang}{$\lambda^{JDB}$\xspace}
\newcommand{\webframework}{Jacqueline\xspace}

\newcommand{\corelang}{\mbox{$\lambda^{\rm\emph{jeeves}}$}\xspace}

\setlength{\abovedisplayskip}{0pt}
\setlength{\belowdisplayskip}{0pt}

\input{brackets.sty}

\begin{document}
\toappear{}

\setlength{\abovedisplayskip}{0pt}
\setlength{\belowdisplayskip}{0pt}
\setlength{\abovedisplayshortskip}{0pt}
\setlength{\belowdisplayshortskip}{0pt}
%
% any author declaration will be ignored  when using 'plid' option (for double blind review)
%
\title{Precise, Dynamic Information Flow for Database-Backed Applications}
\authorinfo{Jean Yang}{Carnegie Mellon University and Harvard Medical School, USA \vspace{-1ex}}{\vspace{-2ex}}
\authorinfo{Travis Hance}{Dropbox, USA \vspace{-1ex}}{\vspace{-2ex}}
\authorinfo{Thomas H. Austin}{San Jose State University, USA \vspace{-1ex}}{\vspace{-2ex}}
\authorinfo{Armando Solar-Lezama}{Massachusetts Institute of Technology, USA \vspace{-1ex}}{\vspace{-2ex}}
\authorinfo{Cormac Flanagan}{University of California, Santa Cruz, USA \vspace{-1ex}}{\vspace{-2ex}}
\authorinfo{Stephen Chong}{Harvard University, USA \vspace{-1ex}}{\vspace{-2ex}}

\maketitle
\vspace {-0.5ex}
\begin{abstract}
We present an approach for dynamic information flow control across the application and database. Our approach reduces the amount of policy code required, yields formal guarantees across the application and database, works with existing relational database implementations, and scales for realistic applications. In this paper, we present a programming model that factors out information flow policies from application code and database queries, a dynamic semantics for the underlying \dblang core language, and proofs of termination-insensitive non-interference and policy compliance for the semantics. We implement these ideas in \webframework, a Python web framework, and demonstrate feasibility through three application case studies: a course manager, a health record system, and a conference management system used to run an academic workshop. We show that in comparison to traditional applications with hand-coded policy checks, \webframework applications have 1) a smaller trusted computing base, 2) fewer lines of policy code, and 2) reasonable, often negligible, overheads.
\end{abstract}

\category{D.3.3}{Programming Languages}{Language Constructs and Features}

\terms Frameworks, Security

\keywords Web frameworks, information flow

\section{Introduction}
From social networks to electronic health record systems, programs increasingly process sensitive data. As information leaks often arise from programmer error, a promising way to reduce leaks is to reduce opportunities for programmer error.

A major challenge in securing web applications involves reasoning about the flow of sensitive data across the application and database. According to the OWASP report~\cite{SeLINQ}, errors frequently occur at component boundaries. Indeed, the difficulty of reasoning about how sensitive data flows through both application code and database queries has led to leaks in systems from the HotCRP conference management system~\cite{HotCRPBug} to the social networking site Facebook~\cite{FacebookLeak}. The patch for the recent HotCRP bug involves policy checks across application code and database queries.

Information flow control is important to securing the application-database boundary~\cite{DavisC10, SeLINQ, IFDataSecurity, UrWeb}. This is because leaks often involve the results of computations on sensitive values, rather than sensitive values themselves. To reduce the opportunity for inadvertent leaks, we present a \emph{policy-agnostic} approach~\cite{JeevesPaper, AYPLAS13}. Using this approach, the programmer factors out the implementation of information flow policies from application code and database queries. The system manages the policies, removing the need to trust the remaining code. The program thus specifies each policy once, rather than as repeated intertwined checks across the program. Because of this, policy-agnostic programs require less policy code.  We illustrate these differences in Figure~\ref{fig:web_server_comparison}.

\begin{figure}
\centering
\includegraphics[width=3.3in]{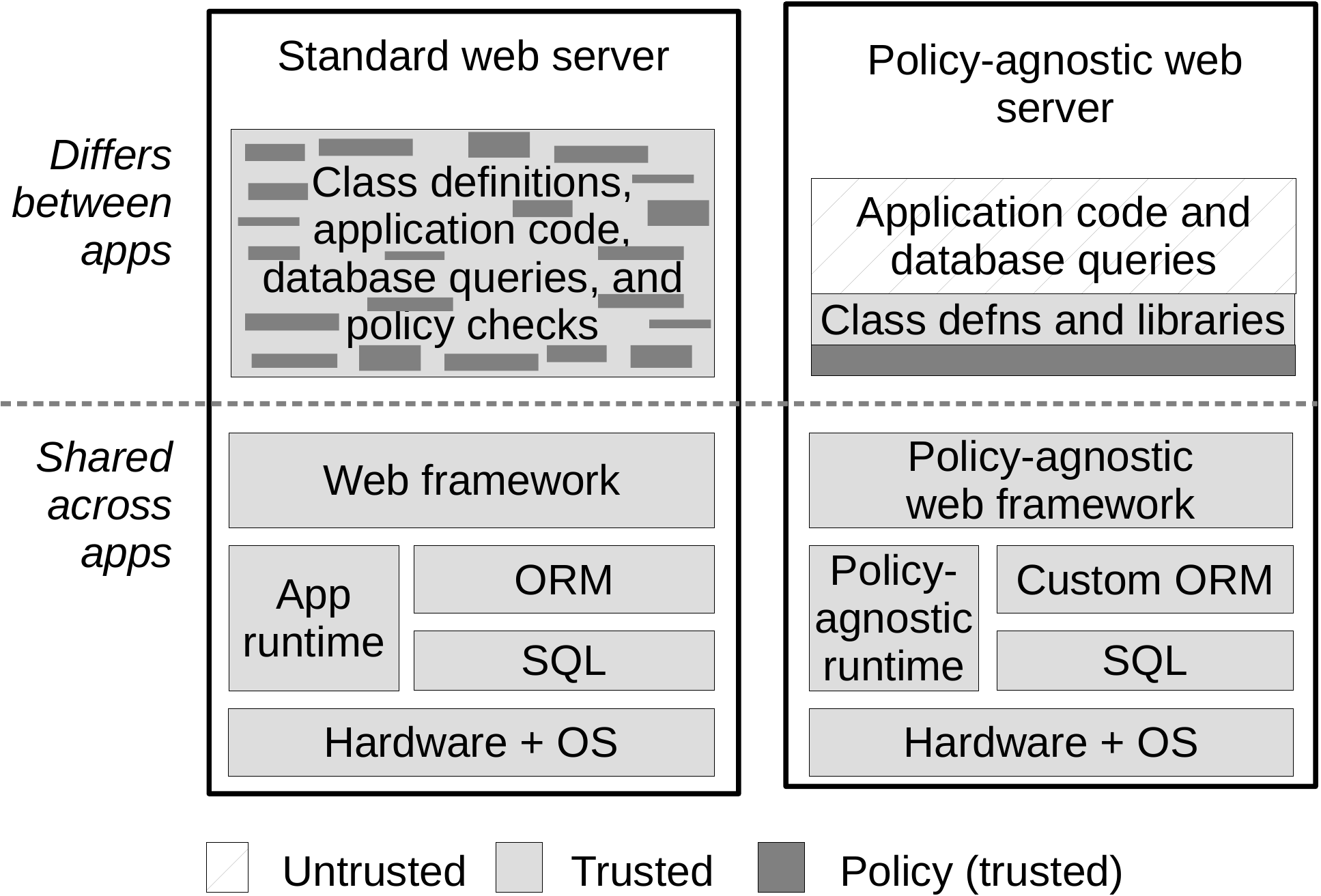}
\caption{Application architecture in a standard web server compared to a policy-agnostic web server.}
\label{fig:web_server_comparison}
\end{figure}

Supporting policy-agnostic programming for web applications requires the framework to enforce information flow policies across the application and database. As we also show in Figure~\ref{fig:web_server_comparison}, a standard web program runs using an application runtime and a database. An object-relational mapping (ORM) to mediate interactions between the two. Our web framework uses a policy-agnostic application runtime and a specialized ORM that mediates interactions between policy-agnostic application code and policy-agnostic database queries.

There are three main parts to our solution: 1) supporting policy-agnostic database queries, 2) providing formal guarantees across the application and database, and 3) addressing issues of practical feasibility. We extend prior work on the Jeeves programming language~\cite{JeevesPaper, AYPLAS13} that defines a policy-agnostic semantics for a simple imperative language. As is common with language-based approaches, Jeeves's guarantees extend only within the Jeeves runtime. Interoperation with external databases is important as web applications rely on commodity databases for performance reasons. The challenge is, then, to support policy-agnostic programming for database queries in a way that leverages existing database implementations while providing strong guarantees.

We present \emph{faceted databases} for supporting policy-agnostic database queries. The Jeeves runtime performs different computations based on the permissions of the user viewing the output. Because the viewer may not be known in advance, the runtime uses \emph{faceted execution} to simulate simultaneous executions. A \emph{faceted value} is the runtime representation of a value that may differ across executions. Semantically, a faceted database stores faceted values and performs faceted query execution. We show how to use a \emph{faceted object-relational mapping} (FORM) to embed faceted values using relational databases and, surprisingly, to support faceted query execution simply by manipulating meta-data. The FORM manages complex dependencies, allowing a policy to query the data it protects.

Next we show that interoperation with faceted databases yields strong guarantees. We extend Jeeves's core language with relational operators to create the \dblang core language. We present a dynamic faceted execution semantics for \dblang and prove termination-insensitive non-interference and policy compliance. The formalization corresponds closely to an implementation strategy using existing database implementations while yielding concise proofs.

Towards supporting realistic applications, we formulate an ``Early Pruning'' optimization. While simulating multiple executions is desirable for reasoning, exploring multiple executions can be expensive in practice. The Early Pruning optimization allows the program to use program assumptions to safely explore fewer executions. This optimization is particularly useful for web applications, where it is often possible to use the session user to predict the viewer. With Early Pruning, performance may even be better than with hand-coded checks, as the runtime may now check policies once rather than repeatedly throughout execution.

Finally, we demonstrate practically feasibility. We present Jacqueline, a web framework based on Python's Django~\cite{Django} framework. We use \webframework to build several application case studies, including a conference management system that we have deployed to run an academic workshop. The case studies show that using \webframework, policies are localized and the size of the policy code is smaller. Consequently, security audits can focus on the localized policy specifications rather than having to review the entire code base. We also demonstrate that \webframework has reasonable, often negligible, overheads. For one case, the \webframework implementation performs better than an implementation with hand-coded policies.
%In one case, \webframework performance is better than with hand-coded checks.

%In \webframework, the programmer writes the policy associated with data once, and the runtime automatically enforces the policy. By contrast, in most other security-conscious web frameworks, the programmer must implement policies by writing checks and filters throughout the application and database code.  In Figure~\ref{fig:web_server_comparison} we compare the architecture of a \webframework program to that of a program using the popular Python web framework Django~\cite{Django}. We show that in \webframework, 1) application and database code do not need to be trusted, 2) policies are localized, and 3) the size of policy code is smaller due to automatic policy enforcement.
%\stevecomment{If we have space, I think we should emphasize the key points of the figure, e.g., The figure shows that  policies in \webframework are not spread throughout the code base, and less application code needs to be trusted.}

In summary, we make the following contributions:
\begin{itemize}
\item \textbf{Policy-agnostic web programming.} We present an approach that allows programmers to factor out information flow policies from the rest of web programs and rely on a web framework to dynamically enforce the policies.
\item \textbf{Faceted databases.} We present faceted databases to support policy-agnostic relational database queries. We present a faceted object-relational mapping (FORM) strategy for implementing faceted databases using existing relational database implementations.
\item \textbf{Faceted execution for database-backed applications.} We show interoperation of faceted databases with faceted application runtimes by presenting a dynamic semantics for the \dblang core language and proving termination-insensitive non-interference and policy compliance.
\item \textbf{Early Pruning optimization.} We address performance issues by formalizing an optimization, proving that it preserves policy compliance, and demonstrating that it significantly decreases overheads.
\item \textbf{Demonstration of practical feasibility.} We present the \webframework web framework and demonstrate expressiveness and performance through several application case studies. We compare against hand-implemented policies, showing that not only does Jacqueline reduce lines of policy code, but also that policy enforcement has reasonable, often negligible, overheads.
\end{itemize}
Our approach decreases the opportunity for programmer error, provides strong formal guarantees, and is practically feasible.

%%%%%%%%%%%%%%%%%%%%%%%%%%%%%%%%%%%%
\section{Introductory Example}
\label{sec:introductory_example}
Using our policy-agnostic web framework, the programmer implements each information flow policy once, associated with the data schemas, as opposed to repeatedly across the code base. We designed \webframework so that programming with it is as similar as possible to programming with Django. In \webframework, the application runtime and object-relational mapping dynamically manipulate sensitive values and policies so the programmer may omit repeated checks.

Consider a social calendar application. Suppose Alice and Bob want to plan a surprise party for Carol, 7pm next Tuesday at Schloss Dagstuhl. They should be able to create an event such that information is visible only to guests. Carol should see that she has an event 7pm next Tuesday, but not that it is a party. Everyone else may see that there is a private event at Schloss Dagstuhl, but not event details.
%\begin{figure}
%\centering
%\includegraphics[width=3in]{}
%\caption{Model-view-controller structure in Django vs. Jacqueline.}
%\label{fig:programming_comparison}
%\end{figure}

%\webframework is a \emph{model-view-controller} (MVC) web framework extending the Python web framework Django~\cite{Django}. The \emph{model} describes the data, the \emph{view} describes the page layouts, and the \emph{controller} describes computation over the data to produce views. Standard ORM frameworks abstract over interactions with an underlying database. \webframework additionally abstracts over the enforcement of information flow policies.

%\begin{figure}
%\begin{lstlisting}[language=Python, xleftmargin=10pt, numbers=left, escapeinside={~}{~}]
%class JQuerySet(QuerySet):
%  all()
%  delete()
%  filter(**kwargs)
%  orderby(**kwargs)
%  get(**kwargs)

%class JModel(Model):
%  create(*args, **kwargs)
%  delete()
%  save(*args, **kwargs)
%\end{lstlisting}
%\caption{The \webframework{} ORM API. The argument \python{*args} denotes an optional list of arguments. The argument \python{**kwargs} denotes an optional dictionary of arguments. The \python{filter} method takes, for instance, arguments for field equalities to filter on.}
%\label{fig:orm_api}
%\end{figure}

We demonstrate how to implement this example using \webframework, our new web framework based on Django~\cite{Django}, a \emph{model-view-controller} framework. In a standard MVC framework, the \emph{model} describes the data, the \emph{view} describes frontend page rendering, and the \emph{controller} implements other functionality. An object-relational mapping (ORM) supports a uniform object representation. In \webframework, the model additionally specifies information flow policies. The faceted object-relational mapping (FORM) additionally supports a uniform representation of sensitive values and policies. \webframework is policy-agnostic: other than the policies, a \webframework program looks like a policy-free Django program.

The division of labor between the programmer and the framework is as follows. The programmer associates information flow policies with fields in the data schema, codes within the subset of Python supported by our Jeeves library, and accesses the database only through the \webframework API. The framework tracks sensitive values and policies between the application and database to produce outputs that adhere to the policies. In our attack model, the user is untrusted and we assume the programmer is not malicious.

We intend for this example to explain the semantics of policy-agnostic web programming. We discuss issues of implementation and optimization issues in later sections.

%In Figure~\ref{fig:orm_api} we show the API for individual \python{JModel} data records and for sets of records \python{JQuerySet}. The programmer may call these APIs exactly as they would call the corresponding Django APIs for \python{Model} and \python{QuerySet}. Note that, like the Django ORM, the \webframework ORM abstracts over implicit joins from foreign keys.
 
\subsection{Schemas and Policies in \webframework{}}
\begin{figure}
%\footnotesize
\begin{lstlisting}[language=Python, xleftmargin=10pt, numbers=left, escapeinside={~}{~}]
class Event(JModel):
~\label{lst:calendar_name}~  name = CharField(max_length=256)
  location = CharField(max_length=512)
  time = DateTimeField()
~\label{lst:calendar_end_of_django}~  description = CharField(max_length=1024)

  # Public value for name field.
  @staticmethod
~\label{lst:calendar_public}~  def jacqueline_get_public_name(event):
    return "Private event"

  # Public value for location field.
  @staticmethod
~\label{lst:calendar_location_public}~  def jacqueline_get_public_location(event):
    return "Undisclosed location"

  # Policies for name and location fields.
  @staticmethod
  @label_for('name', 'location')
  @jacqueline
~\label{lst:calendar_policy}~  def jacqueline_restrict_event(event, ctxt):
    return (EventGuest.objects.get(
      event=self, guest=ctxt) != None)

~\label{lst:calendar_EventGuest}~class EventGuest(JModel):
  event = ForeignKey(Event)
  guest = ForeignKey(UserProfile)
\end{lstlisting}
\caption{\webframework{} schema fragment for calendar events.}
\label{fig:calendar_code}
\end{figure}

In \webframework's policy-agnostic programming model, programmers are responsible for specifying information flow policies and the application runtime and object-relational mapping are responsible for tracking the flow of sensitive values to produce outputs adhering to those policies. Programmers specify each information flow policy once, associated with the \emph{data schema} in the model. We show a sample schema for the \python{Event} and \python{EventGuest} data objects in Figure~\ref{fig:calendar_code}. A \webframework schema defines field names, field types, and optional policies. We define the \python{Event} class with fields \python{name}, \python{location}, \python{time}, and \python{description}. Up to line~\ref{lst:calendar_end_of_django}, this looks like a standard Django schema definition.
%The definition for \python{EventGuest} (line~\ref{lst:calendar_EventGuest}) contains foreign keys to the \python{Event} and \python{UserProfile} tables.

\subsubsection{Secret Values and Public Values}
A sensitive value in \webframework encapsulates a secret (high-confidentiality) view available only to viewers with sufficient permissions and a public (low-confidentiality) view available to other viewers. \webframework allows sensitive values to behave as either the secret value or public value, depending on viewing context (\ie{} the user viewing a page).

The actual field value is the secret view and the programmer must additionally define a method computing the public view. On line~\ref{lst:calendar_public} we define the \python{jacqueline_get_public_name} method computing the public view of the \code{name} field. If the permissions prohibit a viewer from seeing the sensitive \python{name} field, then the \python{name} field will behave as \python{"Private event"} throughout all computations, including database queries. This function takes the current row object (\python{event}) as an argument, allowing public values to be computed using row fields. The \webframework{} ORM uses naming conventions (\ie{} the \python{jacqueline_get_public} prefix) to find the appropriate methods to compute public views.

\subsubsection{Specifying Policies}
In \webframework, programmer-specified information flow policies guard the flow of sensitive values. On line~\ref{lst:calendar_policy} we implement the policy for the fields \python{name} and \python{location}, as indicated by the \python{label_for} decorator. The policy is a method that takes two arguments, the current row object (\python{event}) and the viewer (\python{ctxt}) corresponding to the user looking at a page. Our policy queries the \python{EventGuest} table (line~\ref{lst:calendar_EventGuest}) to determine whether the viewer is associated with the event.

Without \webframework, the programmer would need to implement an equivalent function and call it whenever the \python{location} value is used. Using \webframework, the program no longer needs to explicitly perform these policy checks because \webframework's ORM and application runtime ensure that the policy is enforced.
\webframework{} handles mutable state by enforcing this policy with respect to the value of \python{event} at the time a value is created and the state of the system at the time of output.
%The \python{jacqueline_restrict_event} policy refers to the contents of the \python{EventGuest} table when a user views a page.

\subsection{Faceted Execution}
\webframework uses an enhanced application runtime that keeps track of the secret and public views of sensitive values and results of computations on sensitive values. Once the programmer associates policies with sensitive data fields, the rest of the program may be \emph{policy-agnostic}. We call \python{create} in \webframework the same way as in Django:
\begin{lstlisting}[language=Python]
carolParty = Event.objects.create(
    name = "Carol's surprise party"
  , location = "Schloss Dagstuhl", ...)
\end{lstlisting}
To manage the policies, the \webframework FORM creates faceted values for the sensitive fields. For the \python{name} fields, the framework creates the faceted value $\labexpr{\code{k}}{\code{"Carol's surprise party"}}{\code{"Private event"}}$, where \code{k} is a fresh Boolean \emph{label} guarding the secret actual field value and the public facet computed from the \python{get_public_name} method. The runtime eventually assigns label values based on policies and the viewer. We describe in Section~\ref{sec:solution_overview} how the FORM stores faceted values in a relational database.

The runtime evaluates faceted values by evaluating each of the facets. Evaluating \code{"Alice's events: " + str(alice.events)} yields the resulting faceted value guarded by the same label: % \code{k} as \code{alice.events}:
\begin{lstlisting}[language=Python,columns=fixed]
$\langle$k ? "Alice's events: Carol's surprise party"
   : "Alice's events: Private event"$\rangle$
\end{lstlisting}
Guests of the event will see \code{"Carol's surprise party"} as part of the list of Alice's events, while others will see only \code{"Private event"}. Faceted execution propagates labels through all derived values, conditionals, and variable assignments to prevent indirect and implicit flows.

\webframework{} performs faceted execution for database queries, preventing indirect flows through queries like the following:
\begin{lstlisting}[language=Python]
Event.objects.filter(
  location="Schloss Dagstuhl")
\end{lstlisting}
If \python{carolParty} is the only event in the database, faceted execution of the \python{filter} query yields a faceted list $\labexpr{\python{m}}{\python{[carolParty]}}{\python{[]}}$. Viewers who should not be able to see the \python{location} field will not be able to see values derived from the sensitive field.

\webframework{} also prevents implicit leaks through writes to the database. For instance, consider this code that replaces the \python{description} field of \python{Event} rows with \python{"Dagstuhl event!"} when the \python{location} field is \python{"Schloss Dagstuhl"}:
\begin{lstlisting}[language=Python]
for loc in Event.objects.all():
  if loc.location == "Schloss Dagstuhl":
    loc.description = "Dagstuhl event!"
    save(loc)
\end{lstlisting}
For \python{carolParty} the condition evaluates to $\labexpr{\python{k}}{\python{True}}{\python{False}}$. The runtime records the influence of \python{k} when evaluating the conditional so that the call to \python{save} writes $\labexpr{\python{k}}{\python{carolPartyNew}}{\python{carolParty}}$, where \python{carolPartyNew} is the updated value.
%If a viewer cannot see the actual value of \python{carolParty.location}, the viewer will also not be able to see the updated \python{description} field.

\subsection{Computing Concrete Views}
\label{sec:concrete_views}
Computation sinks such as \python{print} take an additional argument corresponding to the viewer and resolves policies according to the viewer and policies. For instance, \python{print carolParty.name} displays \python{"Carol's surprise party"} to some viewers and \python{"Private event"} to others. The programmer does not need to designate the viewer: it can be an implicit parameter set from authorization information.

The policies and viewer define a system of constraints for determining label values. Printing \code{carolParty.name} to \code{alice} corresponding to the following constraint:
\begin{lstlisting}[language=Python]
k $\Rightarrow$
  (EventGuest.objects.get(
    event=self, guest=ctxt) != None)
\end{lstlisting}
To account for dependencies on mutable state, the runtime evaluates this constraint in terms of the guest list at the time of output. Labels are the only free variables in the fully evaluated constraints. There is always a consistent assignment to the labels: assigning all labels to \python{False} is always valid.

The constraint semantics allows \webframework to handle mutual dependencies between policies and sensitive values. Suppose that the guest list policy depended on the list itself:
\begin{lstlisting}[language=Python]
@label_for('guest')
def jacqueline_restrict_guest(eventguest, ctxt):
  return (EventGuest.objects.get(
    event=eventguest.e, guest=ctxt) != None)
\end{lstlisting}
The policy requires that there must be an entry in the \python{EventGuest} table where the \python{guest} field is the viewer \python{ctxt}, so the policy for the \python{guest} field depends on the value of the field itself. There are two valid outcomes for a viewer who has access: either the system shows empty fields or the system shows the actual fields. \webframework{} always attempts to show values unless policies require otherwise. Note that unless there are mutual dependencies, \webframework may determine label values by evaluating policies directly.

Such circular dependencies are increasingly common in real-world applications. Consider, for instance, the following policies: a viewer must be within some radius of a secret location to see the location; a viewer must be a member of a secret list to see the list. Unfortunately, it is common practice to execute such policies in a trusted ``omniscient'' context that risks leaking information.

%The semantics are also useful for performance: in Section~\ref{sec:evaluation} we show that in some cases, policy-agnostic implementations can achieve better performance than implementations with hand-coded policies.

\section{The Faceted Object-Relational Mapping}
\label{sec:solution_overview}
Our faceted object-relational mapping (FORM) 1) uses meta-data to represent faceted values and 2) manages queries by manipulating meta-data and marshalling to and from the database representation. Surprisingly, our solution allows us to use existing relational database implementations for creating, updating, selecting, joining, and sorting records.
%The main challenge came from the requirement to interoperate with existing relational databases while yielding a clean semantics and proofs. We wanted an implementation strategy that leverages existing query implementations, but to provide guarantees we can trust, the implementation strategy also needs to correspond to a clean formalism. There are many ways to encode faceted execution using standard relational operators, but most of them are prohibitively complex for reasoning about the desired properties.
In this section, we introduce the faceted object-relational mapping (FORM) using SQL syntax and present the Early Pruning optimization.

%%%%%%%%%%%%%%%%%%%%%%%%%%%%%%%%%%%%%%%%%%%%%%%%%%%%%%%%%%%%%%%%%%%%%%%%%%%
\subsection{Executing Relational Queries with Facets}
%%%%%%%%%%%%%%%%%%%%%%%%%%%%%%%%%%%%%%%%%%%%%%%%%%%%%%%%%%%%%%%%%%%%%%%%%%%
\label{sec:sql_mapping}

\begin{table*}
\centering
%\noalign{\vskip 2mm}
%\begin{lstlisting}[language=SQL, escapeinside={!}{!}]
%CREATE TABLE Event COLUMNS (
%  id INTEGER PRIMARY KEY,
%  name VARCHAR(128),
%  location VARCHAR(128),
%);
%
%  !~!
%\end{lstlisting}
%&
%\begin{lstlisting}[language=SQL]
%CREATE TABLE Event COLUMNS (
%  id INTEGER PRIMARY KEY, # ignored
%  name VARCHAR(128),
%  location VARCHAR(128),
%  jid INTEGER,
%  jvars VARCHAR(128),
%);
%\end{lstlisting}\\
%\noalign{\vskip 2mm}
%\noalign{\vskip 2mm}
\begin{tabular}{lllll}
\sql{id} & \sql{name} & \sql{location} & \sql{jid} & \sql{jvars}\\
\hline
1 & \sql{"Carol's ... party"} &\sql{"Schloss Dagstuhl"} & 1 & \sql{"x=True"}\\
2 & \sql{"Private event"} & \sql{"Undisclosed location"} & 1 & \sql{"x=False"}\\
\end{tabular}\\
\caption{Example table.}
\label{tab:comparison_vanilla_jelf}
\end{table*}

\begin{table*}
\centering
\begin{tabular}{ll}
\textbf{Django Query} & \textbf{\webframework{} Query}\\
\hline
\hline
\noalign{\vskip 2mm}
	\multicolumn{2}{c}{ \python{EventGuest.objects.filter(guest__name="Alice")} }\\
\begin{lstlisting}[language=sql, escapeinside={!}{!}]
SELECT EventGuest.event, EventGuest.guest
  FROM EventGuest
  JOIN UserProfile
    ON EventGuest.guest_id = UserProfile.id
  WHERE UserProfile.name='Alice';

  !~!
\end{lstlisting}
	&
\begin{lstlisting}[language=sql]
SELECT EventGuest.event, EventGuest.guest,
    EventGuest.jid, EventGuest.jvars,
    UserProfile.jvars
  FROM EventGuest
  JOIN UserProfile
    ON EventGuest.guest_id = UserProfile.jid
  WHERE UserProfile.name='Alice';
\end{lstlisting}
\end{tabular}
\caption{Translated ORM queries in Django vs. \webframework{}.}
\label{tab:comparison_vanilla_jelf_sql}
\end{table*}

A \emph{faceted row} is a faceted value containing leaves that are non-faceted relational records. Any record containing faceted values may be rewritten to be of this form. We map each faceted row to multiple database rows by augmenting records with meta-data columns corresponding to 1) a unique identifier \sql{jid} and 2) an identifier \sql{jvars} describing which facet the row corresponds to, for instance \sql{"k1=True,k2=True"}.

The FORM is responsible for marshalling between the database and runtime representations of faceted values. The FORM stores the faceted value $\labexpr{\code{k}}{\code{"Carol's surprise party"}}{\code{"Private event"}}$ as two rows in the \sql{Event} table with the same \sql{jid} of \sql{1}. The secret facet has a \sql{jvars} value of \sql{"k=True"} and the public facet has a \sql{jvars} value of \sql{"k=False"}.
%Rows may have a \emph{primary key} for uniquely identifying each record. \webframework{} uses a primary key for every table, as does vanilla Django. Some columns of a row may also be \emph{foreign keys}: a foreign key in one table points to another record (possibly in a different table) by primary key.
%To represent faceted values in SQL, the \webframework{} ORM augments each record with additional columns for bookkeeping related to facets. In addition, the \webframework{} ORM stores multiple rows for every faceted value.
For nested facets, we store more labels in the \sql{jvars} column, for instance \sql{"k1=True,k2=True"}.
%For instance, the following faceted value gets encoded as three database rows where the \sql{jvars} strings are \sql{"k1=True,k2=True"}, \sql{"k1=True,k2=False"}, and \sql{"k1=False"}:
%\begin{lstlisting}[language=python,columns=fixed]
%$\langle$k1 ? $\langle$k2 ? "Carol's surprise party" : "Party"$\rangle$
%    : "Private event"$\rangle$
%\end{lstlisting}
In Table~\ref{tab:comparison_vanilla_jelf} we show how this faceted value would look in an augmented table.

\subsubsection{Queries That Track Sensitive Values}
A key advantage of our representation is that the FORM can issue standard relational queries not only for selections and projections, but also joins and sorts. Storing each facet in a different row allows the FORM to rely on the correct marshalling of query results for preventing indirect flows through queries. Note that the FORM would not be able to issue relational queries in such a straightforward way, for instance, if it stored each faceted value in the same row, or if it stored different facets in different databases.

Consider the query \sql{SELECT *} \sql{from Event WHERE} \sql{location = "Schloss Dagstuhl"} on the rows from Table~\ref{tab:comparison_vanilla_jelf}. Issuing the query directly on the augmented database will return the one matching row with \sql{jid=1} and \sql{jvars="k=True"}.
%only the rows that match:\\
%\begin{tabular}{|llll|}
%\hline
%\dots & \sql{location} & \sql{jid} & \sql{jvars}\\
%\hline
%\hline
%\dots & \sql{"Schloss Dagstuhl"} & \sql{1} & \sql{"k=True"}\\
%\hline
%\end{tabular}\\
Reconstructing the facet structure yields a faceted value guarded by label \sql{k} with a collection containing the record in the secret facet and an empty collection in the other facet. Relying on unmarshalling is sufficient for faceted execution.
%Since the initial \sql{location} field is guarded by label \sql{k}, the results are also guarded by label \sql{k}.

Surprisingly, rows from joins that occur based on sensitive values will also be appropriately guarded by the appropriate path conditions. The only additional considerations the FORM needs to make for joins are to 1) take into account the \sql{jvars} fields from both tables and 2) ensure that foreign keys (references into another table) use \sql{jid} rather than the primary key. In Table \ref{tab:comparison_vanilla_jelf_sql}, we show an example where the \sql{WHERE} clause filters on the results of a \sql{JOIN}. In the \sql{ON} clause, we use the \code{jid} rather than \code{id}. In the \sql{SELECT} clause, we include the \code{User.jvars} as well as the \code{EventGuest.jvars} field.

A particularly nice consequence of storing each facet in different rows is that the FORM can take advantage of SQL's \sql{ORDER BY} functionality for sorting. Suppose we had faceted records, each with a single field \code{f}, with values $\labexpr{\code{a}}{\code{"Charlie"}}{\code{"***"}}$, $\labexpr{\code{b}}{\code{"Bob"}}{\code{"***"}}$, and $\labexpr{\code{c}}{\code{"Alice"}}{\code{"***"}}$.
%On the left we show the database representation and on the right we show the records ordered by the field \code{f} (where \sql{jid} and \sql{jvars} are abbreviations for \sql{jid} and \sql{jvars}, respectively):\\
%\begin{tabular}{|l|l|l|}
%\hline
%\sql{f} & \sql{jid} & \sql{jvars}\\
%\hline
%\hline
%\sql{"Charlie"} & \sql{0} & \sql{"a=True"}\\
%\sql{"***"} & \sql{0} & \sql{"a=False"}\\
%\sql{"Bob"} & \sql{1} & \sql{"b=True"}\\
%\sql{"***"} & \sql{1} & \sql{"b=False"}\\
%\sql{"Alice"} & \sql{2} & \sql{"c=True"}\\
%\sql{"***"} & \sql{2} & \sql{"c=False"}\\
%\hline
%\end{tabular}
%\begin{tabular}{|l|l|l|}
%\hline
%\sql{f} & \sql{jid} & \sql{jvars}\\
%\hline
%\hline
%\sql{"***"} & \sql{0} & \sql{"a=False"}\\
%\sql{"***"} & \sql{1} & \sql{"b=False"}\\
%\sql{"***"} & \sql{2} & \sql{"c=False"}\\
%\sql{"Alice"} & \sql{2} & \sql{"c=True"}\\
%\sql{"Bob"} & \sql{1} & \sql{"b=True"}\\
%\sql{"Charlie"} & \sql{0} & \sql{"a=True"}\\
%\hline
%\end{tabular}\\\\
The FORM can use the standard sorting procedure without leaking information because the secret values are stored in different rows from the public values. Correct unmarshalling will enforce the policies so that, for instance, an output context with the permitted labels $\{a, \notlab b, c \}$ would see \code{["***", "Alice", "Charlie"]}.

A limitation is that the FORM cannot use existing relational implementations for aggregation, for instance counting or summing. Using aggregate queries directly could leak information because without looking at the path conditions, these aggregates would combine values across facets. This does not suggest a fundamental limitation. Applications often prematerialize aggregates, making it reasonable to use the faceted runtime to precompute aggregates. Otherwise, supporting faceted aggregation at scale is a matter of optimizing the procedures, perhaps as database user-defined functions.

\subsubsection{Creating and Updating Data and Policies}
The FORM creates tables and rows with the appropriate meta-data to keep track of facets. The FORM prevents implicit leaks through updates by updating meta-data appropriately and potentially deleting rows. Invoking \python{save} in branches that depend on faceted values creates facets that incorporate the path conditions.
%The ORM ensures that \python{delete} removes all rows corresponding to a conceptual row.
To add policies, the programmer needs to manipulate only the meta-data columns (\python{jvars} and \python{jid}). Adding policies to legacy data involves adding meta-data columns. Updating policies using existing labels simply involves updating policy code.

\subsection{Early Pruning Optimization}
\label{sec:early_pruning}
%Much of the overhead of faceted execution comes from executing with all possible views until a computation sink, as faceted values may grow exponentially in the number of labels. Whenever the viewer is not known, executing with all possible paths is necessary. This happens, for instance, when the program computes the viewer based on sensitive information, for instance when sending mail to all invitees of an event. Another case is when the program computes sensitive values to be written to the database, as the system usually cannot know the viewer of future database queries.

An important correctness-preserving optimization is to prune facets once the runtime knows the viewer. This involves being able to determine 1) the viewing context and 2) that policy-relevant state relevant will not change before output. Two properties of web programs make this analysis simple. First, the session user is often the viewing context. Second, computation sinks are easy to identify in model-view-controller frameworks: most functions either read from the database or write to the database, but not both.
%This allows us to implement functionality that, f
This makes it advantageous for the framework to speculate on the viewer for ``get'' requests.
%The optimization is especially helpful because many pages that require substantial computation do not also involve writes to the database. 
We formalize Early Pruning in Section~\ref{sec:early_pruning_rule}.
% We can also perform an Early Pruning optimization for saves by adding extra code that limits the visibility of a save operation to certain viewers, provided that the programmer knows the viewers ahead of time.

\subsection{Data Representation Considerations}
\label{sec:data_considerations}
It is also important to discuss whether storing faceted values in the database may be prohibitively expensive. There are many ways to avoid storing too much data in practice. Work on multi-level databases~\cite{DenningAMNSH86, Lunt1990} suggests it is both useful and practically feasible to store multiple versions of data corresponding to different access levels. The question becomes, then, how to avoid storing too much data due to too many possible path conditions. An important optimization involves combining values that are the same to a single view. In Section~\ref{sec:formal_semantics}, we define an optimization to allow sharing rows that different facets have in common.
%In addition, if low-confidentiality are straightforward to compute, the FORM can be responsible for computing them lazily, upon retrieving results from the database.

\newcommand{\ctxteval}[2]{#1[#2]}

\section{Formal Semantics and Policy Compliance}
\label{sec:formal_semantics}

We model the faceted object relational mapping with the idealized core language called \dblang. We prove that \dblang satisfies termination-insensitive non-interference and policy compliance across the application and database.
%When public values do not depend on secret values, \dblang satisfies a non-interference property across the application and database.

\subsection{Syntax and Formal Semantics}
\newcommand{\ue}{e}
\newcommand{\mkref}[1]{\t{ref}~#1}
\newcommand{\deref}[1]{\t{!}#1}
\newcommand{\assign}[2]{#1\t{:=}\,#2}

\newcommand{\readFile}[1]{\t{read}(#1)}
\newcommand{\writeFile}[2]{\t{write}(#1,#2)}
\newcommand{\print}[2]{\t{print}~\{#1\}~#2}
\newcommand{\levelDecl}[2]{\t{label}~{#1}~\t{in}~{#2}}

\newcommand{\fh}{f}

\newcommand{\defaultVal}{0}

% encodings
\newcommand{\ife}[3]{\t{if}~{#1}~\t{then}~{#2}~\t{else}~{#3}}
\newcommand{\letin}[3]{\t{let}~{#1}~\t{=}~{#2}~\t{in}~{#3}}

% bool vals
\newcommand{\true}{\fun{true}}
\newcommand{\false}{\fun{false}}

% Specifying policy as a lambda function that returns a boolean value.
\newcommand{\policy}[2]{\t{restrict}(#1,#2)}
\newcommand{\policyStd}[1]{\t{restrict}(#1)}
\newcommand{\policyVar}{policy}
\newcommand{\policyStore}[1]{#1^{policy}}
\newcommand{\policyAdd}{\wedge_f}

\newcommand{\fileOut}[2]{{#1}\!:\!{#2}}

%% New commands for DB.
\newcommand{\Expr}{Expr}
\newcommand{\row}[1]{\t{row} \ #1}
\newcommand{\select}[3]{\sigma_{#1=#2} \ #3}
\newcommand{\project}[2]{\pi_{#1} \ #2}
\newcommand{\join}[2]{#1 \bowtie #2}
\newcommand{\union}[2]{#1 \cup #2}
\newcommand{\fold}[3]{\t{fold} \ #1 \ #2 \ #3}

\newcommand{\gray}[1]{\textcolor{gray}{#1}}

\begin{figure}
\centering
%$\begin{array}{l@{\quad}c@{\quad}l@{\qquad\quad}r}
\[
\begin{array}{llr}
        \mydefhead{e ::=\qquad\qquad\qquad\qquad}{Term}
        \mydefcase{\gray{x}}{\gray{variable}}
        \mydefcase{\gray{c}}{\gray{constant}}
        \mydefcase{\gray{\lam x e}}{\gray{abstraction}}
        \mydefcase{\gray{\ue_1~\ue_2}}{\gray{application}}
        \mydefcase{\gray{\mkref e}}{\gray{reference allocation}}
        \mydefcase{\gray{\deref e}}{\gray{dereference}}
        \mydefcase{\gray{\assign {e_1} {e_2}}}{\gray{assignment}}
        \mydefcase{\gray{\labexpr{k}{e_H}{e_L}}}{\gray{faceted expression}}
        \mydefcase{\gray{\levelDecl{k}{e}}}{\gray{label declaration}}
        \mydefcase{\gray{\policy{k}{e}}}{\gray{policy specification}}
        \mydefcase{\row{\overline{e}}}{create a table}
        \mydefcase{\select{i}{j}{e}}{select rows where $i=j$}
        \mydefcase{\project{\overline{i}}{e}}{project columns}
        \mydefcase{\join{e_1}{e_2}}{join or cross-product of tables}
        \mydefcase{\union{e_1}{e_2}}{union of tables}
        \mydefcase{\fold{e_f}{e_p}{e_t}}{table fold}
       \\
        \mydefhead{\gray{S ::=\qquad\qquad\qquad\qquad}}{\gray{Statement}}
        \mydefcase{\gray{\letin{x}{e}{S}}}{\gray{let statement}}
        \mydefcase{\gray{\print{e_v}{e_r}}}{\gray{print statement}}
        \\
        \mydefhead{c::=}{Constant}
        \mydefcase{\gray{\fh}}{\gray{file handle}}
        \mydefcase{\gray{b}}{\gray{boolean}}
        \mydefcase{\gray{i}}{\gray{integer}}
        \mydefcase{s}{string}
\\
        \mydefhead{\gray{x,y,z}}{\gray{Variable}}
        \mydefhead{\gray{k,l}}{\gray{Label}}
\end{array}
\]
\caption{\dblang{} syntax.}
\label{fig:core_syntax}
\end{figure}

The language \dblang extends the language \corelang~\cite{AYPLAS13} with support for databases, which we model as relational tables. Figure~\ref{fig:core_syntax} summarizes the \dblang syntax, with the constructs from \corelang marked in gray. The \corelang language, in turn, extends the standard imperative $\lambda$-calculus with constructs for declaring new labels ($\levelDecl{k}{e}$), for imperatively attaching policies to labels ($\policy{k}{e}$), and for creating faceted values ($\labval{k}{e_H}{e_L}$). This last expression behaves like $e_H$ from the perspective of any principal authorized to see data with label $k$ and $e_L$ for all other principals. Note that \dblang does not include imperative updates to tables, but we can model updates by introducing a layer of indirection where we access tables via references and updating a table corresponds to replacing the contents of the appropriate reference.

The language \dblang extends \corelang with support for databases, where each table is a (possibly empty) sequence of rows and each row is a sequence of strings. We require that all rows in a table have the same size. To manipulate tables, \dblang includes the usual operators of the relational calculus: \emph{selection} ($\select{i}{j}{e}$), which selects the rows in a table where fields $i$ and $j$ are identical, \emph{projection} ($\project{\overline{i}}{e}$), which returns a new table containing columns $\overline{i}$ from the table $e$, \emph{cross-product} ($\join{e_1}{e_2}$), which returns all possible combinations of rows from $e_1$ and $e_2$, and \emph{union} ($\union{e_1}{e_2}$), which appends two tables.
The construct $\row{\overline{e}}$ creates a new single-row table. The fold operation $\fold{e_f}{e_p}{e_t}$ supports iterating, or folding, over tables. Fold has the ``type'' $\forall \overline{A}, B. (B \rightarrow \overline{A} \rightarrow B) \rightarrow B \rightarrow \jtable{\overline{A}} \rightarrow B$.

%The \dblang semantics extends the \corelang semantics with \emph{faceted databases}. The core construct in the \corelang semantics is the faceted expression, $\labexpr{k}{e_H}{e_L}$, where $k$ is a label guarding the secret, or \emph{high-confidentiality facet}, facet $e_H$ and the public, or \emph{low-confidentiality} facet $e_L$. The \corelang semantics describe the manipulation of facets and labels to preserve policy compliance guarantees. The core new construct in the \dblang semantics is the \emph{faceted table} containing \emph{row facets} of the form $(B, \overline{s})$, where $B$ is a set of labels and $\overline{s}$ are the string fields of the row. The faceted row value $\labexpr{k}{\overline{s}_1}{\overline{s}_2}$ becomes the row facets $(\{k\}, \overline{s}_1)$ and $(\{\notlab{k}\}, \overline{s}_2)$. Note that we do not need to explicitly model the facet identifier row, as we can simply encode it as one of the fields.

\subsection{Formal Semantics}
\begin{figure*}
%\footnotesize
\input{standard_facet_rules}
\caption{Faceted evaluation of \dblang without relational operators.\label{fig:fvSem}\label{fig:runSyn}\label{fig:refUtils}}
\end{figure*}

\begin{figure*}
\vspace{-3ex}
\[
\begin{array}[t]{cl}
\setuprulecol
% F-ROW
\mmrule{f-row}{
  }{
    \sstep \Sigma {\row{\overline{s}}} {pc} \Sigma {(\jtable{(\epsilon, \overline{s})})}
  }
% F-SELECT
\mmrule{f-select}{
    T' = \{ (B, s_1 \dots s_n) \in T ~|~ s_i = s_j \}
  }{
    \sstep \Sigma {\select{i}{j}{(\jtable{T})}} {pc} \Sigma {(\jtable{T'})}
  }
\end{array}
\begin{array}[t]{cl}
\setuprulecol
% F-UNION
\mmrule{f-union}{
  }{
    \sstep \Sigma {\union{(\jtable{T_1})}{(\jtable{T_2})}} {pc} \Sigma {(\jtable{T_1.T_2})}
  }
% F-PROJECT
\mmrule{f-project}{
    \overline{i} = i_1 \dots i_n \qquad
    T' = \{ (B, s_{i_1} \dots s_{i_n}) ~|~ (B, s_1 \dots s_m) \in T\}\\
  }{
    \sstep \Sigma {\project{\overline{i}}{(\jtable{T})}} {pc} \Sigma {(\jtable{T'})}
  }
\end{array}
 \]
\vspace{-3ex}
\[
\begin{array}[t]{cl}
% F-JOIN
\vspace{-2ex}
\mmrule{f-join}{
     T_3 = \{ (\union{B_1}{B_2}, s_1 \dots s_m s_1' \dots s_n') ~|~ (B_1, s_1 \dots s_m) \in T_1, (B_2, s_1' \dots s_n') \in T_2 \}
  }{
    \sstep \Sigma {\join{(\jtable{T_1})}{(\jtable{T_2})}} {pc} \Sigma {(\jtable{T_3})}
  }
% F-FOLD-EMPTY
\mmrule{f-fold-empty}{

  }{
    \sstep {\Sigma} {\fold{V_f}{V_p}{(\jtable{\epsilon})}} {pc} {\Sigma} {V_p}
  }
% F-FOLD-INCONSISTENT
\mmrule{f-fold-inconsistent}{
  \sstep {\Sigma} {\fold{V_f}{V_p}{(\jtable{T})}} {pc} {\Sigma'} {V'} \qquad
  B \t{ inconsistent with } pc\\
}{
  \sstep {\Sigma} {\fold{V_f}{V_p}{(\jtable{(B, \jrow{s}).T})}} {pc} {\Sigma'} {V'}
}
% F-FOLD-CONSISTENT
\mmrule{f-fold-consistent}{
  \sstep {\Sigma} {\fold{V_f}{V_p}{(\jtable{T})}} {pc} {\Sigma'} {V'} \qquad
  B \t{ consistent with } pc \qquad
  \sstep {\Sigma'} {V_f~\jrow{s}~V'} {pc \cup B} {\Sigma''} {V''}
}{
  \sstep {\Sigma} {\fold{V_f}{V_p}{(\jtable{(B, \jrow{s}).T})}} {pc} {\Sigma''} {\labResult{B}{V''}{V'}}
}
\end{array}
\]
\vspace{-2ex}
\caption{Faceted evaluation with relational operators.}\label{fig:relOps}
\end{figure*}

We formalize the big-step semantics as the relation $\sstep {\Sigma} {e} {pc} {\Sigma'} {V}$, denoting that expression $e$ and store $\Sigma$ evaluate to $V$, producing a new store $\Sigma'$. The program counter $pc$ is a set of \emph{branches}. Each branch is either a label $k$ or a negated label $\notlab{k}$. Association with $k$ means the computation is visible only to principals authorized to see $k$ and association with $\notlab{k}$ visibility only to principals \emph{not} authorized to see $k$.

%Values include faceted integers such as $\labexpr{k}{3}{4}$. 
We chose our representation of faceted databases to be faithful to realistic implementation strategies. We could represent faceted tables as $\labexpr{k}{\jtable{T_1}}{\jtable{T_2}}$, but this approach would incur significant space overhead, as it requires storing two copies of possibly large database tables, possibly with only small differences between the two tables. Instead, we use the more efficient approach of \emph{faceted rows}, where each row $(B, \overline{s})$ in the database includes a set of branches $B$ describing who can see that row. For example, the expression $\labexpr{k}{\row{\sql{"Alice"}~{\sql{"Smith"}}}}{\row{\sql{"Bob"}~\sql{"Jones"}}}$ evaluates to the following table~\footnote{Note that this value representation does not support mixed expressions such as $\labexpr{k}{3}{\row{\mbox{"Alice"}}}$, which mix integers and tables in the same faceted values. Programs that try to unnaturally mix values will get stuck.}:
\begin{align*}
(\{k\}, (\sql{"Alice"}, \sql{"Smith"}))\\
(\{\notlab{k}\}, (\sql{"Bob"}, \sql{"Jones"}))
\end{align*}
Note that we do not model the facet identifier row \python{jid}, as it is not necessary for the formal semantics or proof.

To accommodate both faceted values and faceted tables, we define the partial operation $\labResult \cdot \cdot \cdot$ to create either a new faceted value or a table with internal branches on rows:
\[
\begin{array}{lcl}
  \labResult \cdot \cdot \cdot &:& \ULabel \times \UValue \times \UValue \rightarrow \UValue\\
  \labResult{k}{F_H}{F_L} &\defeq& \labexpr{k}{F_H}{F_L}\\
  \labResult{k}{\jtable{T_H}}{\jtable{T_L}} &\defeq& \jtable{T}\\
  \multicolumn{3}{l}{\mbox{where}~T = \{ (B, \overline{s}) ~|~ (B, \overline{s}) \in T_H \cap T_L \} \cup}\\
  \multicolumn{3}{l}{\quad\quad\quad\quad\quad\quad \{ (B \cup \{ k \}, \overline{s}) ~|~ (B, \overline{s}) \in T_H \setminus T_L, \notlab{k} \not\in B \} \cup}\\
  \multicolumn{3}{l}{\quad\quad\quad\quad\quad\quad \{ (B \cup \{ \notlab{k} \}, \overline{s}) | (B, \overline{s}) \in T_L \setminus T_H, k \not\in B \}}\\
\end{array}
\]
Wrapping a facet with label $k$ around non-table values $F_H$ and $F_L$ simply creates a faceted value containing $k$, $F_H$, and $F_L$. Wrapping a facet with label $k$ around tables $T_H$ and $T_L$ creates a new table $T$ containing the rows from $T_H$ and $T_L$, annotated with $k$ and $\notlab{k}$ respectively, with an optimization to share the rows that $T_H$ and $T_L$ have in common.
We extend this operator to sets of branches:
\[
\begin{array}{@{}lcl}
  \labResult{\cdot}{\cdot}{\cdot} &:& \USignedLabels \times \UValue \times \UValue \rightarrow \UValue\\
  \labResult{\emptyset}  {V_H} {V_L}
    &\defeq&  V_H \\
  \labResult {\{ k \} \cup B}{V_H}{V_L}
    &\defeq&   \labResult k {\labResult {B} {V_H} {V_L}} {V_L} \\
  \labResult {\{ {\notlab k} \} \cup B}{V_H}{V_L}
    &\defeq&   \labResult k {V_L} {\labResult B {V_H} {V_L}} \\
\end{array}
\]

We show the faceted evaluation rules in Figures~\ref{fig:runSyn} and~\ref{fig:relOps}. The key rule is \rel{f-split}, describing how evaluation of a faceted expression $\labexpr{k}{e_1}{e_2}$ involves evaluating the sub-expressions in sequence.
Evaluation adds $k$ to the program counter to evaluate $e_1$ and $\notlab k$ to evaluate $e_2$ and then joins the results in the operation $\labResult k {V_1}{V_2}$. The rules \rel{f-left} and \rel{f-right} show that only one expression is evaluated if the program counter already contains either $k$ or $\notlab k$.

Our rules use contexts to describe faceted execution. The rule \rel{f-ctxt} for $E[e]$ enables evaluation of a subexpression inside an evaluation context. We use $S$ to range over strict operator contexts, operations that require a non-faceted value. If an expression in a strict context yields a faceted value $\labexpr{k}{V_H}{V_L}$, then the rule \rel{f-strict} applies the strict operator to each of $V_H$ and $V_L$. For example, the evaluation of $\labexpr{k}{f}{g}(4)$ reduces to the evaluation of $\labexpr{k}{f(4)}{g(4)}$, where $S$ in this case is $\ctxthole(4)$. The rules \rel{f-select}, \rel{f-select}, \rel{f-proj}, \rel{f-join}, and \rel{f-union} formalize the relational calculus operators on tables of faceted rows.
%These rules are mostly straightforward.

The rules for folding are more interesting. If a row $(B, \overline{s})$ is inconsistent with (\ie, not visible to) the current program counter label $pc$, then rule \rel{f-fold-inconsistent} ignores that row. If the row is consistent, then rule \rel{f-fold-consistent} applies the fold operator $V_f$ to the row contents $\overline{s}$ and the accumulator $V'$, producing a new accumulator $V''$. The result of that fold step is $\labResult{B}{V''}{V'}$, a faceted expression that appears like $V''$ to principals that can see the $B$-labeled row and like $V'$ to other principals.

The faceted execution semantics describe the propagation of labels and facets for the purpose of complying with policies at computation sinks. \dblang expressions do not perform I/O, while \dblang statements include the effectful construct $\print{e_v}{e_r}$ that prints expression $e_r$ under the policies and viewing context $e_v$. We provide the \corelang rules for declaring labels, attaching policies, and assigning labels for printing in Appendix~\ref{sec:extra_rules}. The 
\python{@label_for} and \python{jacqueline_restrict} constructs correspond to thhe \rel{f-label} and \rel{f-restrict} rules.

%%%

\newcommand{\lemProjLabel}{
\begin{align*}
L(\labResult{k}{V_1}{V_2}) &= \left\{ \begin{array}{lr}
                              L(V_1) & \t{if } k \in L\\
                              L(V_2) & \t{if } k \not\in L\\
                              \end{array} \right.\\
\end{align*}
}

\newcommand{\lemProjBranch}{
\begin{align*}
L(\labResult{B}{V_1}{V_2}) &= \left\{ \begin{array}{lr}
                              L(V_1) & \t{if } \visible{B}{L}\\
                              L(V_2) & \t{if } \neg(\visible{B}{L})\\
                              \end{array} \right.\\
\end{align*}
}

\newcommand{\lemInvis}{
If $pc$ is not visible to $L$ and
\begin{equation*}
\sstep {\Sigma} {e} {pc} {\Sigma'} {V}
\end{equation*}
then $L(\Sigma) = L(\Sigma')$.
}

\newcommand{\thmProjection}{
Suppose $\sstep {\Sigma} {e} {pc} {\Sigma'} V$.
Then for any view $L$ for which $pc$ is visible,
\begin{align*}
\sstep {L(\Sigma)} {L(e)} {\emptyset} {L(\Sigma')} {L(V)}
\end{align*}
}

\subsection{Application-Database Policy Compliance}
\corelang~\cite{AYPLAS13} has the properties that 1) a single faceted execution is equivalent to multiple different executions without faceted values and 2) the system cannot leak sensitive information through the output or the choice of output channel. We prove that the properties extend to \dblang.
%If a viewer has access only to the public facet of an expression, then faceted execution is output-equivalent to executing with only the public facet in the first place.

The proof involves extending the \emph{projection} property of \corelang: a single execution with faceted values \emph{projects} to multiple different executions without faceted values. To prove this property, we first define what it means to be a \emph{view} and to be \emph{visible}.
A \emph{view} $L$ is a set of principals. $B$ is visible to view $L$ (written $\visible{B}{L}$) if $\forall k \in B . k \in L$ and $\forall \notlab{k} \in B . k \not\in L$.
We extend views to values:
\begin{align*}
L: Val \t{(with facets)} &\rightarrow Val \t{(without facets)}\\
L(R) &= R\\
L(\labval{k}{F_1}{F_2}) &= \left\{ \begin{array}{lr}
                            L(F_1) & k \in L\\
                            L(F_2) & k \not\in L\\
                            \end{array} \right.\\
L(\jtable{T}) &= \{ (\emptyset, \jrow{s}) ~|~ (B, \jrow{s}) \in T, B \t{ visible to } L\}\\
\end{align*}
We extend views to expressions:
\begin{align*}
L(\labval{k}{e_1}{e_2}) &= \left\{ \begin{array}{lr}
                            L(e_1) & k \in L\\
                            L(e_2) & k \not\in L\\
                            \end{array} \right.\\
\end{align*}
For all other expression types we recursively apply the view to subexpressions.

We then prove the Projection Theorem. The full proof is in Appendix~\ref{sec:projection}. Proofs of the key lemmas are in Appendices~\ref{sec:lemma_projlabel} and~\ref{sec:lemma_projbranch}.
\begin{thm}[Projection]
\label{thm:projection}
%Suppose $\sstep {\Sigma} {e} {pc} {\Sigma'} V$. Then for any view $L$ for which $pc$ is visible,
%\begin{align*}
%\sstep {L(\Sigma)} {L(e)} {\emptyset} {L(\Sigma')} {L(V)}
%\end{align*}
\thmProjection
\end{thm}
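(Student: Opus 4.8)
The plan is to induct on the derivation of $\sstep {\Sigma} {e} {pc} {\Sigma'} V$, showing in each case that the projected judgment $\sstep {L(\Sigma)} {L(e)} {\emptyset} {L(\Sigma')} {L(V)}$ holds for every $L$ to which $pc$ is visible. The guiding observation is that $L(\cdot)$ erases all facets: since $L(e)$ contains no faceted subexpression and the projected program counter is $\emptyset$, none of the splitting rules (\rel{f-split}, \rel{f-left}, \rel{f-right}, \rel{f-strict}) can fire on the right-hand side, so the projected derivation is an ordinary facet-free evaluation and the task reduces to producing the correct final value and store. The value and reference rules are immediate: \rel{f-val} projects a value to itself, and \rel{f-ref}, \rel{f-deref}, \rel{f-deref-null}, and \rel{f-assign} all commit $\labResult{pc}{\cdot}{\cdot}$ to the store, so I use the branch-projection lemma together with visibility of $pc$ to $L$ to conclude $L(\labResult{pc}{V}{\Sigma(a)}) = L(V)$ and hence that the projected store update agrees. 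For \rel{f-app} I first note that projection commutes with substitution, $L(e[x := V]) = L(e)[x := L(V)]$, and then apply the induction hypothesis.

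The crux is \rel{f-split}. Because $pc$ is visible to $L$, exactly one of $k \in L$ or $k \notin L$ holds. Suppose $k \in L$; then $L(\labexpr{k}{e_1}{e_2}) = L(e_1)$ by definition, and $pc \cup \{k\}$ is still visible to $L$, so the induction hypothesis on the first premise yields the projected evaluation of $L(e_1)$ to $L(V_1)$ through $L(\Sigma_1)$. Meanwhile $pc \cup \{\notlab k\}$ is \emph{not} visible to $L$, so by the invisibility lemma the second premise leaves the store unchanged under projection, $L(\Sigma_1) = L(\Sigma')$. Finally the label-projection lemma gives $L(\labResult{k}{V_1}{V_2}) = L(V_1)$, closing the case; the case $k \notin L$ is symmetric. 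The rules \rel{f-left} and \rel{f-right} are easier: when $k \in pc$ and $pc$ is visible to $L$ we have $k \in L$, so again $L(\labexpr{k}{e_1}{e_2}) = L(e_1)$ and the single premise's induction hypothesis suffices. For \rel{f-ctxt} I use that projection distributes over evaluation contexts, and for \rel{f-strict} that $L(\labval{k}{S[V_H]}{S[V_L]})$ collapses to the same single facet that $L$ selects from $\labval{k}{V_H}{V_L}$, so the strict operator is applied to exactly the facet $L$ retains.

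For the relational operators the key fact is that $L(\jtable{T})$ keeps precisely the rows whose branch set is visible to $L$ and strips their annotations, and that each operator commutes with this filtering. Selection and projection act only on string fields, so they commute immediately; \rel{f-row} and \rel{f-union} are equally direct; and for \rel{f-join} I use that $\visible{B_1 \cup B_2}{L}$ iff $\visible{B_1}{L}$ and $\visible{B_2}{L}$, so the visible joined rows are exactly the cross product of the individually visible rows. The fold rules require the most care. In \rel{f-fold-inconsistent}, a row whose $B$ is inconsistent with the visible $pc$ is also not visible to $L$, so $L$ drops it and the induction hypothesis on the tail finishes the case. In \rel{f-fold-consistent} I case-split on whether $B$ is visible to $L$: if it is, then $pc \cup B$ is visible, the induction hypotheses on the recursive fold and on the step $V_f\,\jrow{s}\,V'$ apply, and the branch-projection lemma gives $L(\labResult{B}{V''}{V'}) = L(V'')$, matching a genuine fold step on the retained row; if $B$ is not visible, $L$ drops the row, the invisibility lemma shows the step does not change the projected store, and the branch-projection lemma gives $L(\labResult{B}{V''}{V'}) = L(V')$, matching skipping the row.

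I expect the main obstacle to be the interaction in \rel{f-fold-consistent} between \emph{consistency with $pc$} and \emph{visibility to $L$}: a branch set $B$ consistent with $pc$ need not be visible to $L$, so the naive identification of ``rows the faceted fold processes'' with ``rows the projected fold sees'' fails, and the sub-case split plus a combined appeal to the invisibility and branch-projection lemmas is essential. The \rel{f-split} case is the other delicate point, since it is the only place where correctness hinges on invoking the invisibility lemma for the branch that $L$ cannot see; getting the store bookkeeping right there is what makes the combining operation $\labResult{k}{V_1}{V_2}$ project cleanly.
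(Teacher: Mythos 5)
Your proposal is correct and follows essentially the same route as the paper's proof: induction on the faceted derivation with the same three supporting lemmas (label projection, branch projection, and store invisibility under an invisible $pc$), the same case split on $k \in L$ in \rel{f-split}, and the same visibility sub-case analysis in \rel{f-fold-consistent}, including the key observation that a branch set inconsistent with a visible $pc$ cannot itself be visible to $L$ (the paper's Lemma~\ref{lem:inconBranches}). No gaps to report.
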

\noindent The Projection Theorem allows us to extend \corelang's property of termination-insensitive non-interference. To state the theorem we first define two faceted values to be \emph{$L$-equivalent} if they have identical values for the view $L$. This notion of $L$-equivalence naturally extends to stores $(\Sigma_1 \equivlab{\pc} \Sigma_2)$ and expressions $(e_1 \equivlab{\pc} e_2)$. The theorem is as follows:
%\[
%\begin{array}{@{}ccc@{~~}l}
  %(V_1 \equivlab{\pc} V_2)            &\mbox{iff}& \pc(V_1)=\pc(V_2) & \\
  %(\Sigma_1 \equivlab{\pc} \Sigma_2)  &\mbox{iff}& \pc(\Sigma_1) = \pc(\Sigma_2) \\
  %(e_1 \equivlab{\pc} e_2)            &\mbox{iff}& \pc(e_1)=\pc(e_2) \\
%\end{array}
%\]

\begin{thm}[Termination-Insensitive Non-Interference]
\label{thm:noninterference} \mbox{}\\
Let $L$ be any view. Suppose $\Sigma_1\equivlab{L} \Sigma_2$ and $e_1 \equivlab{L} e_2$, and that:
\[
\begin{array}{l@{\qquad}l}
  \sstep{\Sigma_1} {e_1} {\emptyset} {\Sigma_1'} {V_1} &
  \sstep{\Sigma_2} {e_2} {\emptyset} {\Sigma_2'} {V_2} \\
\end{array}
\]
~~~~then
$\Sigma_1' \equivlab{L}\Sigma_2'$ and
$V_1  \equivlab{L} V_2$.
\end{thm}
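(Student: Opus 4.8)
The plan is to derive non-interference almost immediately from the Projection Theorem (\thmref{projection}), reducing the two faceted executions to a pair of ordinary, facet-free executions that start from identical configurations, and then to appeal to determinism of the projected semantics.

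First I would observe that the empty program counter is visible to every view: since $\visible{B}{L}$ imposes conditions that quantify over the (nonexistent) elements of $B$, the relation $\visible{\emptyset}{L}$ holds vacuously for all $L$. Hence the visibility hypothesis of \thmref{projection} is satisfied by both given derivations. Applying the Projection Theorem to $\sstep{\Sigma_1}{e_1}{\emptyset}{\Sigma_1'}{V_1}$ yields $\sstep{L(\Sigma_1)}{L(e_1)}{\emptyset}{L(\Sigma_1')}{L(V_1)}$, and applying it to the second derivation yields $\sstep{L(\Sigma_2)}{L(e_2)}{\emptyset}{L(\Sigma_2')}{L(V_2)}$. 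Both of these are executions over facet-free terms and stores, since $L(\cdot)$ collapses every faceted expression and value to a single branch, and with $pc = \emptyset$ the labelling operations degenerate (for example $\labResult{\emptyset}{V}{0} = V$ in \rel{f-ref}, and projected tables carry empty branch sets), so the projected evaluation never reintroduces facets.

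Next I would invoke the $L$-equivalence hypotheses. By definition, $\Sigma_1 \equivlab{L} \Sigma_2$ means $L(\Sigma_1) = L(\Sigma_2)$ and $e_1 \equivlab{L} e_2$ means $L(e_1) = L(e_2)$. Therefore the two projected derivations begin from exactly the same store and the same expression. The remaining step is to conclude that they also end in the same store and value, namely $L(\Sigma_1') = L(\Sigma_2')$ and $L(V_1) = L(V_2)$; unfolding the definition of $\equivlab{L}$ once more, this is precisely the desired conclusion $\Sigma_1' \equivlab{L} \Sigma_2'$ and $V_1 \equivlab{L} V_2$.

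The crux, and the one place that needs real work, is determinism of the facet-free evaluation relation: from a single starting configuration with $pc = \emptyset$, the relation $\Downarrow$ must produce a unique store and value. I would establish this as a separate lemma by rule induction on the derivation, checking that for each facet-free term at most one rule applies and that its premises fix the outcome uniquely. The only genuinely non-deterministic rule is reference allocation \rel{f-ref}, which picks a fresh address $a \notin \fun{dom}(\Sigma)$; I expect to handle this either by fixing a canonical choice of fresh addresses, or, more robustly, by strengthening the determinism statement to uniqueness up to a bijective renaming of addresses and checking that $L$-equivalence is preserved under such renamings. Once determinism (in the appropriate sense) is in hand, the two projected derivations from the common configuration must agree, which closes the argument.
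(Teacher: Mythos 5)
Your proposal takes essentially the same route as the paper: the paper's entire argument for this theorem is to derive it from the Projection Theorem, projecting both faceted runs to facet-free runs that start from identical configurations (by the $L$-equivalence hypotheses, since $\emptyset$ is visible to every view) and concluding by determinism of the projected semantics. The extra care you devote to the freshness non-determinism in \rel{f-ref} (determinism only up to a bijective renaming of addresses) addresses a detail the paper leaves implicit — indeed its projection proof already assumes ``without loss of generality'' that both runs allocate the same address — so your strengthened determinism lemma is the right way to make that step rigorous.
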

\noindent The Termination-Insensitive Non-Interference Theorem allows us to extend the termination-insensitive \emph{policy compliance} theorem of \corelang~\cite{AYPLAS13}: data is revealed to an external observer only if it is allowed by the policies specified.

\subsection{Early Pruning}
\label{sec:early_pruning_rule}
The Early Pruning optimization involves shrinking a table $T$ by keeping each row $(B, \overline{s})$ only when $B$ is consistent with the viewer constraint described by $pc$. We show the rule below:
\[
\begin{array}{rclcl}
\mmrule{f-prune}{
    \sstep {\Sigma} {e} {pc} {\Sigma'} {(\jtable{T})}\\
    T' = \{ (B, \jrow{s}) \in T ~|~ B \t{ consistent with } {pc} \}\\
  }{
    \sstep {\Sigma} {e} {pc} {\Sigma'} {(\jtable{T'})}
  }
\end{array}
\]
We prove the Projection Theorem holds with this extension.

\subsection{Policy Dependencies on Sensitive Values}
Policies on a label may contain sensitive values that depend on the label. The \dblang semantics handles this situation by propagating labels through all computations and then assigning label values according to the \rel{f-print} rule from \corelang (Appendix~\ref{sec:extra_rules}). Our theorem modifies the traditional notion of non-interference to accommodate these dependencies. In the statement of Theorem~\ref{thm:noninterference}, the resulting environments need only be $L$-equivalent when the viewer \emph{does not} have access to the high-confidentiality view. If a viewer does not have access, then the sensitive value should be indistinguishable from any other value the viewer does not have access to.

\section{Implementation}
\label{sec:implementation}
While previous implementations of Jeeves~\cite{JeevesPaper, AYPLAS13} use Scala, we implement \webframework in Python, as an extension of Django~\cite{Django}, because of the popularity of both for web programming.
Our code is available at \url{https://github.com/jeanqasaur/jeeves}.

\subsection{Python Embedding of the Jeeves Runtime}
We implemented Jeeves as a library that dynamically rewrites code to behave according to the \corelang semantics.
%The programmer simply needs to use the \python{@jacqueline} decorator to indicate each class or function should execute according to the faceted semantics. 
The library exports functions for creating labels, creating sensitive values, attaching policies, and using policies to show values. Our implementation supports a subset of Python's syntax that includes if-statements, for-loops, and return statements.

%For example, a field might have the faceted value $\labexpr{\code{x}}{0}{\python{Unassigned()}}$. This means the field has value $0$ under \code{x = high} and does not exist under \code{x = low}.

%Each Python object is potentially replaced with a faceted object e.g., instead of 2 we might have $\labexpr{x}{2}{0}$. Instead of True we might have $\labexpr{x}{True}{False}$.

%In Python, an object is basically a dictionary mapping field names to values. In Python Jeeves, a faceted object is just like an ordinary object, except its field values might be faceted values, or a facet of references to other objects.

%In order to understand why we need to apply the \python{@jeeves} decorator and what it does, we first explain the problems it solves:

\subsubsection{Faceted Execution}
To support faceted execution, the implementation defines a \python{Facet} data type for primitives and objects where the facets may themselves be faceted. A value may exist only in some execution paths, in which case we use a special object \python{Unassigned()} for other paths. To perform faceted execution, the implementation uses operator overloading and dynamic source transformation via the macro library \python{MacroPy}~\cite{MacroPy}. The source transformation intercepts evaluation of conditionals, loops, assignments, and function calls.
The implementation handles local assignment by replacing a function's local scope with a \python{Namespace} object determining scope. To prevent implicit flows, the runtime keeps track of path conditions to index state updates, database writes, and policy declarations.
\subsubsection{Evaluating Policies at Computation Sinks}
The runtime maps labels to policies. If there are no mutual dependencies between policies and sensitive values, the runtime evaluates policies to determine label values. Otherwise, the runtime produces an ordering over Boolean label assignments and uses the SAT subset of the Z3 SMT solver~\cite{Z3} to find a satisfying assignment.

\subsection{\webframework Implementation}
We extend Django's functionality by ``monkey-patching,'' inheriting from Django's classes and overloading the methods of the FORM. The FORM is responsible for 1) marshalling between faceted representations in the application and database and 2) managing the meta-data to track facets in the database. To represent faceted values, the FORM creates schemas with additional meta-data columns. The FORM reconstructs facets from the meta-data by looking up policies from object schemas and adding them to the runtime environment. We implement the Early Pruning optimization by reconstructing only the relevant facets when the runtime knows the viewer. FORM queries manipulate the meta-data columns in addition to the actual columns. Programmers may access the database only through the supported API.
%Because the FORM supports selection, projection, and join but not aggregates, the programmer must perform aggregates within the application runtime. Note that this is not a fundamental limitation, as one could implement optimized implementations or prematerializations of faceted aggregation.

\section{\webframework in Practice}
\label{sec:evaluation}
We built 1) a conference management system, 2) a health record manager, and 3) a course management system to evaluate \webframework along the following dimensions:
\begin{itemize}
\item \textbf{Code architecture.} We compare the implementation of the \webframework conference management system to an implementation with hand-coded policies in Django. We demonstrate that \webframework helps with both centralizing policies and with size of policy code.
\item \textbf{Performance.} We show that for representative actions, \webframework has comparable---and, in one case, better---performance compared to Django. For the stress tests, the \webframework programs often have close to zero overhead and at most a 1.75x slowdown compared to vanilla Django. We also demonstrate the effectiveness of and necessity of the Early Pruning optimization.
\end{itemize}
While the conference management system has the most features, the applications have code and policies of similar complexity. We deployed our conference management system to run an academic workshop~\cite{PLOOC2014}.

We worked with two undergraduate research assistants to implement the health record and course manager case studies to evaluate the usability of our programming model. Both students found the policy-agnostic approach to be ``promising'' and to hide complexity in implementing information flow policies. There were some objections about the boilerplate needed to use the Jeeves and \webframework libraries, as well as the acknowledgment that building these features into a language runtime directly would mitigate this issue.

\subsection{Case Study Applications}
\noindent \textbf{Conference management system.} We support user registration, update of profile information, designation of roles (\ie{} PC member), paper and review submission, and assignment of reviews.
%Users may be authors, PC members, or the PC chair; only the PC chair can designate users as PC members. The administrator specifies the PC chair when configuring the system. The PC chair has additional privileges: for instance, assigning reviewers to papers.
Permissions depend on the current stage of the conference: submission, review, or decision.

\noindent \textbf{Health record manager.} We implemented a simple health record system based on a representative fragment of the privacy standards described in the Health Insurance Portability and Accountability Act (HIPAA)~\cite{HIPAASummary, BarthHIPAA}. HIPAA describes how individuals, hospitals, and insurance companies may view a medical history depending on roles and stateful information such as whether there exists a permission waiver. The case study manages health records and permissions when viewed by patients, doctors, and insurance companies.
%An example policy is that information about an individual's hospital visits is visible to the individual, the individual's insurance company, and to the site administrator. Policies may also depend on properties such as whether there exists a waiver permitting information release.

\noindent \textbf{Course manager.} Our tool allows instructors and students to organize assignments and submissions. Policies depend on the role of the viewer, as well as stateful information such as whether an assignment has been submitted.

\subsection{Code Comparisons}
We compare our \webframework implementation of a conference management system against a Django implementation of the same system. We demonstrate that 1) \webframework reduces the trusted computing base and 2) separating policies and other functionality decreases policy code size. 

\subsubsection{Django Conference Management System}
\begin{figure}
\centering
\textbf{Lines of Policy Code: \webframework vs. Django}\\
\includegraphics{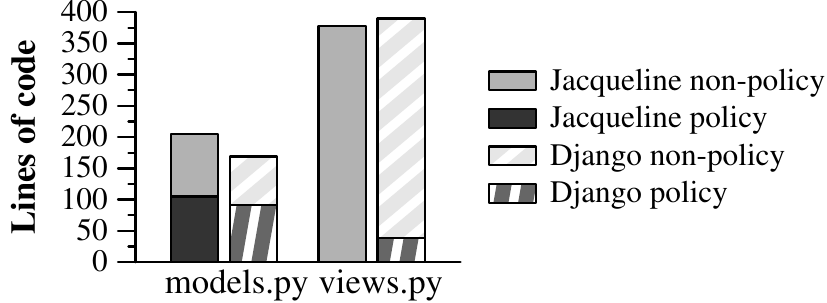}
\caption{Distribution of policy code with \webframework and Django conference management systems.}
\label{fig:django_loc_comparison}
\end{figure}

We compare the lines of code in the \webframework and Django conference management systems in Figure~\ref{fig:django_loc_comparison}. (Note that \webframework counts are bloated from the additional imports and function decorators required.)
\webframework demonstrates advantages in both the distribution and size of policy code. In the \webframework implementation, policy code is confined to the \python{models.py} file describing the data schemas, while in the Django implementation, there are also policies throughout the controller file \python{views.py}. The \webframework implementation has 106 total lines of policy code, whereas the Django implementation has 130 lines manifesting as repeated checks and filters across \python{views.py}. While the Django code requires auditing the $575$ lines of \python{models.py} and \python{views.py}, the \webframework code requires auditing only the $200$ lines of \python{models.py} (\textasciitilde$200$ lines of code), reducing the size of the application-specific trusted code base by 65\%.

\begin{figure}
%\footnotesize
\begin{lstlisting}[language=Python, xleftmargin=10pt, numbers=left, escapeinside={~}{~}]
class Paper(Model):
  ... 
  @staticmethod
  @label_for('author')
  @jeeves
  def jeeves_restrict_author(paper, ctxt):
    if phase == 'final':
      return True
    else:
      if paper == None:
        return False
      if PaperPCConflict.objects.get(
          paper=paper, pc=ctxt) != None:
        return False
      return ((paper != None and
        paper.author == ctxt)
        or (ctxt != None and
              (ctxt.level == 'chair' or
                ctxt.level == 'pc')))
\end{lstlisting}
\caption{Jacqueline schema and code fragments.}
\label{fig:jacqueline_cfm_code}
\end{figure}

\begin{figure}
\textbf{Django Schema}
\begin{lstlisting}[language=Python, xleftmargin=10pt, numbers=left, escapeinside={~}{~}]
class Paper(Model):
  ...
  def policy_author(self, ctxt):
    if phase == 'final':
      return True
    else:
      try:
        conflict =
          PaperPCConflict.objects.get(
            paper=self, pc=ctxt)
        return False
      except:
        return ((self.author == ctxt)
          or (ctxt != None and
            (ctxt.level == 'chair' or
              ctxt.level == 'pc')))
\end{lstlisting}

\textbf{Python Code with Policy Checks}
\begin{lstlisting}[language=Python, xleftmargin=10pt, numbers=left, escapeinside={~}{~}]
def papers_view(request):
  papers = Paper.objects.all()
  for paper in papers:
    if not paper.policy_paperlabel(user):
      paper.author = None
\end{lstlisting}
\caption{Django schema and code fragments.}
\label{fig:django_cfm_code}
\end{figure}

We show a fragment of \webframework policy code in Figure~\ref{fig:jacqueline_cfm_code} and a fragment of the analogous Django policy code in Figure~\ref{fig:django_cfm_code}, along with an example of how the policy check functions are called in the Django implementation. The policy functions are similar across the two implementations, with small differences from the fact that the \webframework API to the database does not raise an exception if an entry is missing. The main difference is that in the Django implementation, the programmer is responsible for calling these policy functions (as we show in Figure~\ref{fig:django_cfm_code}), whereas in the \webframework implementation the runtime is responsible for handling the interaction with policy functions.

\subsection{Performance Measurements}
We measured times using an Amazon EC2 m3.2xlarge instance running Ubuntu 14.04 with 30GB of memory, two 80GB SSD drives, and eight virtual 64-bit Intel(R) Xeon(R) CPU E5-2670 v2 2.50Ghz processors. We use the FunkLoad testing framework~\cite{FunkLoad} for HTTP requests across the network, excluding CSS and images. We average over 10 rapid sequential requests. We test with sequential users because how well \webframework handles concurrent users compared to Django simply depends on the amount of available memory.

We show 1) policy enforcement in \webframework has reasonable overheads, especially compared to Django and 2) Early Pruning is effective and often necessary.

\subsubsection{Stress Tests}
\begin{figure}
\centering
%\scriptsize
\includegraphics{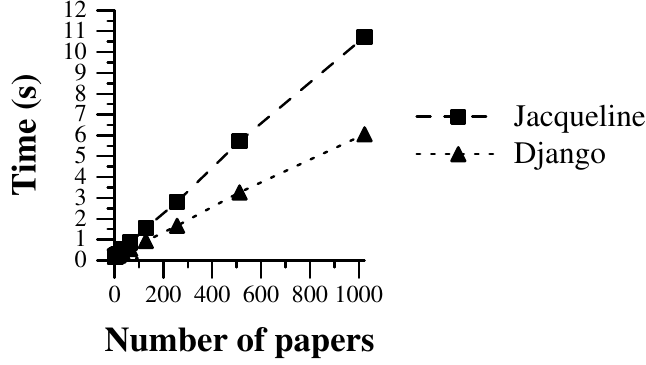}
\includegraphics{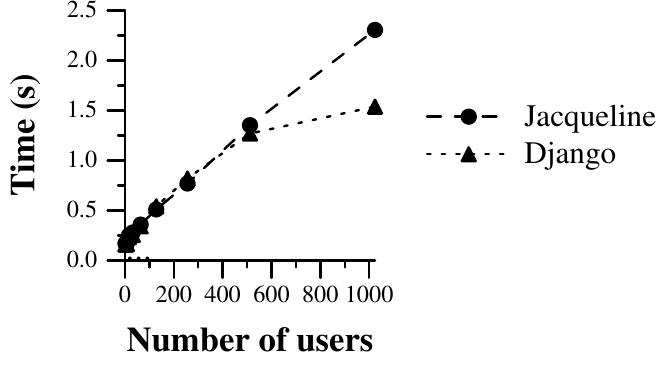}

(a) Conference management system

\begin{tabular}{cc}
\includegraphics{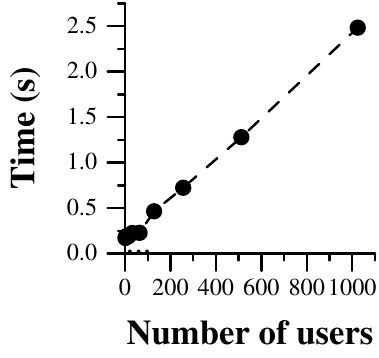} & \includegraphics{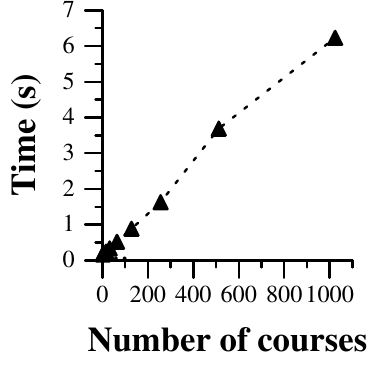}\\
(b) Health record manager & (c) Course manager\\
\end{tabular}

\caption{Stress test times for our three case studies. We compare the conference management system times (a) to an implementation in Django.}
\label{fig:stress_tests}
\end{figure}

\begin{table}
\begin{tabular}{|r|r|r|}
\hline
\multicolumn{3}{|c|}{Time to view all papers}\\
\hline
\# P & Jacq. & Django\\
\hline
\hline
8 & 0.241s & 0.201s\\
16 & 0.299s & 0.241s\\
32 & 0.542s & 0.388s\\
64 & 0.855s & 0.554s\\
128 & 1.551s & 0.931s\\
256 & 2.810s & 1.633s\\
512 & 5.717s & 3.265s\\
1024 & 10.729s & 6.055s\\
\hline
\end{tabular}
\begin{tabular}{|r|r|r|}
\hline
\multicolumn{3}{|c|}{Time to view all users}\\
\hline
\# U & Jacq. & Django\\
\hline
\hline
8 & 0.172s & 0.163s\\
16 & 0.249s & 0.234s\\
32 & 0.279s & 0.254s\\
64 & 0.358s & 0.341s\\
128 & 0.510s & 0.541s\\
256 & 0.769s & 0.820s\\
512 & 1.352s & 1.269s\\
1024 & 2.305s & 1.538s\\
\hline
\end{tabular}
\label{tab:cfm_stress_tests}
\caption{Times to view a list of summary information for conference manager stress tests, in \webframework and Django.}
\end{table}

In Figure~\ref{fig:stress_tests} we show running times from our stress tests. For each application, we show an increasing number of a given type of data item. The graphs demonstrate that with both \webframework and Django, the time to load data scales linearly with respect to the underlying algorithms.  
The numbers (Table~\ref{tab:cfm_stress_tests}) show that \webframework has at most a 1.75x overhead. The overhead comes from fetching both versions of data before resolving the policies. There is no solver overhead, as there are no mutual dependencies between sensitive values and policies.
Note that these are truly stress tests: most systems will not load a thousand data rows at once, especially when each value has its own policy involving database queries.

%A more realistic website would load such a page in fragments and consolidate policies.

\subsubsection{Representative Actions}
\begin{table}
\centering
%\scriptsize
\begin{tabular}{|r|r|r|}
\hline
\multicolumn{3}{|c|}{Time to view single paper}\\
\hline
Papers & Jacq. & Django\\
\hline
\hline
8 & 0.160s & 0.177s\\
16 & 0.165s & 0.175s\\
32 & 0.160s & 0.177s\\
64 & 0.159s & 0.173s\\
128 & 0.160s & 0.173s\\
256 & 0.159s & 0.173s\\
512 & 0.159s & 0.178s\\
1024 & 0.161s & 0.173s\\
\hline
\end{tabular}
\begin{tabular}{|r|r|r|}
\hline
\multicolumn{3}{|c|}{Time to view single user}\\
\hline
Users & Jacq. & Django\\
\hline
\hline
8 & 0.164s & 0.158s\\
16 & 0.164s & 0.159s\\
32 & 0.164s & 0.159s\\
64 & 0.164s & 0.159s\\
128 & 0.167s & 0.158s\\
256 & 0.163s & 0.159s\\
512 & 0.169s & 0.162s\\
1024 & 0.163s & 0.159s\\
\hline
\end{tabular}
\caption{Times to view profiles for a single paper and single user, in \webframework and Django.}
\label{tab:representative_actions}
\end{table}

We increased the number of relevant database entries and measured the time it takes to view the profiles for single papers and users. We show these numbers, as well as comparisons to Django, in Table~\ref{tab:representative_actions}. The time it takes to load these profiles is under $2$ms and roughly equivalent to the time it takes to do the equivalent action in Django. For viewing a single paper, \webframework actually performs better than the Django implementation. This is because in the Django code, the implementation needs iterate over collections of data rows again in order to apply policy checks. In the \webframework implementation, the framework applies the policies and resolves each one once. Times for submitting a single paper scale similarly.

\subsubsection{Early Pruning Optimization}
\begin{table}
\centering
\begin{tabular}{|r|r|r|}
\hline
Courses & Time w/o pruning & Time w/ pruning\\
\hline
\hline
4 & 0.377s & 0.185s\\
8 & 64.024s & 0.192s\\
16 & -- & 0.248s\\
32 & -- & 0.337s\\
64 & -- & 0.522s\\
128 & -- & 0.886s\\
256 & -- & 1.630s\\
512 & -- & 3.691s\\
1024 & -- & 6.233s\\
\hline
\end{tabular}
\caption{Showing all courses, with and without Early Pruning.}
\label{tab:early_pruning}
\end{table}

We found the Early Pruning optimization to be necessary for nontrivial computations over sensitive values. In the course manager stress test, the page that shows all courses also looks up the instructors for each course, leading to blowup. We show in Table~\ref{tab:early_pruning} how for just eight courses and instructors, the system begins to hit memory limits.
%Performance cost becomes prohibitive when computations extensively branch on sensitive values, for example by sorting a list of users based on a sensitive value and taking the $n$th element. 
Because Early Pruning can simplify other computations after the viewer is known, these computations are only problematic when they are used to compute the viewer. We do not expect such computations to be common.

\section{Limitations and Future Work}
\label{sec:discussion}

The policy-agnostic approach does not protect against a malicious programmer who implements incorrect policies. By centralizing the policies and making the implementations more concise, however, we hope to make it easier to audit programs to determine policies are implemented correctly.

The strategy of embedding the programming model in a Python web framework requires the programmer to enforce certain invariants. Because sensitive values exist unprotected in the program runtime and database, the programmer must access sensitive values only through the designated APIs. Implementing Jeeves in a language with private class attributes would alleviate some of these concerns.

Our strategy of using two database rows to encode each faceted value means that we cannot use existing database support for performing aggregates or optimizing based on primary keys. A solution that would also reduce database size is to compute low-confidentiality values upon retrieving data from the database, rather than storing the values in the database. In cases when this is not possible, an alternate strategy is to implement user-defined functions in the database to optimize based on the Jeeves-based unique ID for each faceted value, as well as for aggregates.

Another future direction involves optimizing queries to reduce the amount of data fetched based on the policies associated with the data.

It would also be useful to extend policy-agnostic programming with faceted values to operating systems. We can build on the techniques from the Laminar system~\cite{Roy2009}, which demonstrates how to dynamically enforce policies mediating access to resources such as files and sockets.
%Timing and termination channels and concurrency are outside the scope of this work.

\section{Related Work}
Our approach builds on a long history of work in information flow control~\cite{DenningDenning77, Jif0, FabricSOSP, OaklandFabric, FineCompiler, FStar, FlowLocks, Chugh2009, pottier-simonet-toplas-03, Krohn2007, Resin, Roy2009}. The policy-agnostic approach differs from prior work in the following key way. Using prior approaches, the programmer needs to implement the policy checks and filters correctly across the program. Our solution mitigates programmer burden by leveraging the language runtime to produce outputs adhering to policies. This is similar in philosophy to angelic nondeterminism~\cite{Bodik2010}, program repair~\cite{PlanB, Samimi2012}, and acceptability-oriented computing~\cite{Rinard2004, Rinard2005}.

Prior work on information flow across the application-database boundary focuses on rejecting queries that leak information, rather than on modifying queries to enforce policies. SeLINQ~\cite{SeLINQ}, the work of Louren{\c{c}}o and Caires~\cite{IFDataSecurity}, and Ur/Web use static types. DBTaint~\cite{DavisC10}, Passe~\cite{passe14}, and Hails~\cite{Hails} perform dynamic analysis. SIF~\cite{SIF} combines static labels and dynamic checks. There are also approaches based on symbolic execution~\cite{rozzleTR}, secure multi-execution~\cite{shadowexecution, SecureMultiexecution, WebMultiExecution}, and analysis of data provenance~\cite{ProvenanceSecurity, ProvenanceSecurityConfiguration} focused on rejecting programs that violate desired properties.
%\webframework{}'s tracking of sensitive values in the database is also related to work in data provenance~\cite{CoreCalculusProvenance, ProvenanceFunctionalPrograms, AnnotatedXML}, especially recent 
%Work on data provenance for security~\cite{ProvenanceSecurity, ProvenanceSecurityConfiguration} uses the history of how values were computed for enforcing security properties.

Policy-agnostic programming differs from other approaches in how data may affect control flow. Variational data structures~\cite{Walkingshaw} encapsulate properties related to program customization, but data does not affect control flow. Aspect-oriented programming~\cite{AOP, GenerativeAOP} has similar goals to policy-agnostic programming of separating program concerns, but aspects must be implemented at specific control flow points and cannot alter control flow.

Our approach addresses information flow as opposed to access control~\cite{Rubicon, Margave, Sunny}, which prevents leaks at application endpoints and does not address indirect or implicit flows. Similarly, work on multi-level databases~\cite{DenningAMNSH86, Lunt1990} focuses on the storage and access control issues surrounding data at different levels of access in the database.

\section{Conclusions}
We demonstrate that it is practically feasible to achieve policy compliance by construction in database-backed applications. We present a technique for precise, dynamic information flow control that tracks sensitive values and policies through database queries and updates as well as application code. The technique supports a policy-agnostic programming model that allows the program to specify each information flow policy once, instead of as repeated intertwined checks across the program. The web framework performs different computations depending on the viewer, according to the policies. The shift of responsibility to the framework reduces the opportunity for programmer error to cause information leaks.

Our solution works with existing implementation of relational databases and yields formal guarantees across the application and database. We implement these ideas in the \webframework web framework and demonstrate that, compared to traditional applications with hand-coded policies, applications written using \webframework have less policy code and run with often negligible overheads. This work makes a promising step towards securing database-backed web applications.

\acks
We thank Martin Rinard, James Cheney, our shepherd Kathryn McKinley, and our anonymous reviewers for draft feedback. We thank Benjamin Shaibu and Ariel Jacobs for their work on the case studies. This work was supported by the Facebook and Levine Fellowships, the Qatar Computing Research Institute, and NSF grants 1054172, CCF-1139056, CCF-1337278, and CCF-1421016.
The views and conclusions contained herein are those of the authors and should not be interpreted as representing the official policies, either expressed or implied, of the U.S. Government.

\bibliographystyle{abbrvnat}
\bibliography{jacqueline}

%\clearpage
\newcommand{\unionk}[2]{\union{#1}{\{#2\}}}
\newcommand{\cupk}{^{~\bigcup}_{k\in K}}

\appendix

\section{Rules from \corelang}
\label{sec:extra_rules}

We show the most relevant rules from the dynamic semantics for the Jeeves core language \corelang~\cite{AYPLAS13}.

\subsection{Managing Labels}
These rules describe how to declare labels and attach policies to labels.
The rule \rel{f-label} dynamically allocates a label ($\levelDecl{k}{e}$),
adding a fresh label to the store with the default policy of $\lam x \true$.
Any occurrences of $k$ in $e$ are $\alpha$-renamed to $k'$ and the expression
is evaluated with the updated store.
Policies may be further refined ($\policy k e$) by the rule \rel{f-restrict},
which evaluates $e$ to a policy $V$
that should be either a lambda or a faceted value comprised of lambdas.
The additional policy check is restricted by $\pc$,
so that policy checks cannot themselves leak data.
The rule joins the resulting policy check $V_p$ with the existing policy for $k$,
ensuring that policies can only become more restrictive.
\[
\begin{array}[t]{cl}
\mrule{f-label}{
                k' fresh \\
          \sstep {\Sigma[k':=\lam x \true]} {e[k:=k']} \pc {\Sigma'} V \\
  }{
    \sstep \Sigma {\levelDecl k e} \pc {\Sigma'} V'
  }
\mrule{f-restrict}{
    \sstep \Sigma e \pc {\Sigma_1} V \\
    V_p = \labResult{\joinLab{k}{\pc}}{V}{\lam x \true}\\
    \Sigma' = \Sigma_1[k:=\Sigma_1(k)\policyAdd V_p] \\
  }{
  \sstep \Sigma {\policy k e} {\pc} {\Sigma'} {V}
  }
\end{array}
\]

\subsection{Displaying Outputs}
The rule \rel{f-print} handles print statements ($\print{e_1}{e_2}$),
where the result of evaluating $e_2$ is printed to the channel resulting from the evaluation of $e_1$.
Both the channel $V_f$ and the value to print $V_c$ may be faceted values. The describes how to select the facets that correspond with our specified policies.
The rule determines the set of relevant labels through the transitive closure function $\fun{closeK}$. The labels are used to construct $e_p$ from the relevant policies in the store $\Sigma_2$.
%The expression $e_p$ contains all relevant policies included in the store $\Sigma_2$.
The rule evaluates $e_p$ and applies it to $V_f$,
returning the policy check $V_p$ that is a faceted value containing booleans.
The rule chooses a program counter $\pc$ such that the policies are satisfied.
This corresponds to a label assignment that determines the channel $\fh$ and the value to print $R$.

%Note that there exists a $\pc'\in\mathit{PC}$ where all branches are set to $\levLow$. This assignment may always be displayed, thereby ensuring that there is always at least one valid choice for $\pc$. 
%Because the constraints are always consistent, the only set of policies relevant to an expression $e$ to output are associated with the transitive closure of labels $L_e$ appearing in $e$ and the policies associated with $L_e$. Thus any policy associated with an out-of-scope variable may be garbaged-collected.

\[
\begin{array}[t]{cl}
\mrule{f-print}{
  \sstep{\Sigma}{e_1}{\emptyset}{\Sigma_1}{V_f} \\
  \sstep{\Sigma_1}{e_2}{\emptyset}{\Sigma_2}{V_c} \\
  \{~ k_1 ~...~ k_n ~\} = \fun{closeK}(\fun{labels}(e_1) ~\cup~ \fun{labels}(e_2), \Sigma_2) \\
  e_p = \lam x \true ~\policyAdd~ \Sigma_2(k_1) ~\policyAdd~ ... ~\policyAdd~ \Sigma_2(k_n) \\
  \sstep{\Sigma_2}{e_p~V_f}{\emptyset}{\Sigma_3}{V_p} \\
  \mbox{pick}~\pc~\mbox{such that}~\pc(V_f)=f,
    \pc(V_c)=R,~ \pc(V_p)=\true
}{
  \stmtstep {\Sigma} {\print{e_1}{e_2}} {V_p} {\fileOut{f}{R}}
}
\end{array}
\]

\[
  \begin{array}{rcl}
    \fun{closeK}(K, \Sigma) & = &
           \mbox{let}~ K' = \cupk~ \fun{labels}(\Sigma(k)) ~\mbox{in}~ \\
        && \mbox{if}~ K' = K \\
        && ~~~~\mbox{then}~ K \\
        && ~~~~\mbox{else}~ \fun{closeK}(K', \Sigma) \\
  \end{array}
\]

%%%%%%%%%%%%%%%%%%%%%%%%%%%%%%%%%%%%%%%%%%%%
\section{Proof of Lemma~\ref{lem:projLabel}}
\label{sec:lemma_projlabel}
%{\it {\bf Lemma~\ref{lem:projLabel}.}
\begin{lem}[A]
\label{lem:projLabel}
%\begin{align*}
%L(\labResult{k}{V_1}{V_2}) &= \left\{ \begin{array}{lr}
%                              L(V_1) & \t{if } k \in L\\
%                              L(V_2) & \t{if } k \not\in L\\
%                              \end{array} \right.\\
%\end{align*}
\lemProjLabel
\end{lem}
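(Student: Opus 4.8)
The plan is to proceed by case analysis on the structure of $V_1$ and $V_2$, mirroring the case split in the definition of the partial operation $\labResult{k}{\cdot}{\cdot}$: either both are (non-table) faceted values, or both are tables $\jtable{T_H}$ and $\jtable{T_L}$. Any mixed case is undefined, so there is nothing to prove there. The non-table case is immediate: here $\labResult{k}{V_1}{V_2} = \labexpr{k}{V_1}{V_2}$, and the definition of $L$ on faceted values directly yields $L(\labexpr{k}{V_1}{V_2}) = L(V_1)$ when $k \in L$ and $L(V_2)$ when $k \not\in L$, which is exactly the claim.

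The table case is where the real work lies. Writing $V_1 = \jtable{T_H}$ and $V_2 = \jtable{T_L}$, we have $\labResult{k}{V_1}{V_2} = \jtable{T}$ with $T$ the three-part union from the definition: the shared rows $T_H \cap T_L$ (Part~1); the rows of $T_H \setminus T_L$ tagged with $k$ (Part~2); and the rows of $T_L \setminus T_H$ tagged with $\notlab{k}$ (Part~3). Since $L$ on a table keeps exactly the rows whose branch set $B$ satisfies $\visible{B}{L}$ and strips that branch set, I would establish each subcase by double inclusion at the level of the surviving raw rows $\jrow{s}$.

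For the subcase $k \in L$, the goal is $L(\jtable{T}) = L(\jtable{T_H})$. In the forward direction, a visible row of $T$ comes from Part~1 (so it already lies in $T_H$ with the same visible branch set) or from Part~2 (so dropping $k \in L$ preserves visibility, and the untagged row is visible in $T_H$); Part~3 rows carry $\notlab{k}$ and are invisible because $k \in L$. Conversely, a visible row $(B,\jrow{s})$ of $T_H$ either also lies in $T_L$, landing in Part~1, or lies in $T_H \setminus T_L$, in which case visibility of $B$ together with $k \in L$ forces $\notlab{k} \not\in B$, so Part~2 contributes $(B \cup \{k\}, \jrow{s})$, which is again visible. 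The subcase $k \not\in L$ is entirely symmetric, exchanging the roles of $T_H$ and $T_L$ and of Part~2 and Part~3.

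The main obstacle is the bookkeeping in the table case: checking that the side conditions of the definition ($\notlab{k} \not\in B$ in Part~2 and $k \not\in B$ in Part~3) and the sharing optimization in Part~1 interact correctly with the visibility predicate. The crucial observations are that, when $k \in L$, visibility of a branch set $B$ is neither gained nor lost by adding or removing $k$, and that $\notlab{k} \in B$ is outright incompatible with visibility; the dual facts hold when $k \not\in L$. These make each inclusion go through and, in particular, guarantee that merging the common rows (the optimization in Part~1) never changes which rows are seen under $L$.
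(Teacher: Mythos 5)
Your proof is correct and follows essentially the same route as the paper's: case analysis on the two clauses of the definition of $\labResult{k}{\cdot}{\cdot}$, with the non-table case immediate from the definition of $L$ on faceted values and the table case settled by tracking which rows survive the visibility filter under $L$ (using the key facts that $\notlab{k} \in B$ is incompatible with visibility when $k \in L$, and that adding $k$ to $B$ does not change visibility in that case). If anything, you are more careful than the paper's own appendix proof, which works with a simplified two-part form of $T$ that omits the shared-row component $T_H \cap T_L$, whereas your double-inclusion argument explicitly checks that the sharing optimization and the side conditions $\notlab{k} \not\in B$ and $k \not\in B$ interact correctly with the visibility predicate.
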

%{\it
%\lemProjLabel
%}
\begin{proof}
  By case analysis on the definition of $\labResult{k}{V_1}{V_2}$. \\
  Let $x = L(\labResult{k}{V_1}{V_2})$.
\begin{itemize}
  \item If $x = L(\labexpr{k}{F_1}{F_2})$ for some non-table values $F_1$ and $F_2$,
    then this case holds since
    \begin{itemize}
      \item $x = L(F_1)$ if $k\in L$.
      \item $x = L(F_2)$ if $k\not\in L$.
    \end{itemize}
  \item If $x = L(\labResult{k}{\jtable{T_1}}{\jtable{T_2}})$,
    then $x = L(\jtable {T})$
    where \\
    $T = \{ (\unionk{B}{k}, \overline{s}) ~|~ (B, \overline{s}) \in T_1, \notlab k\not\in B \}$ \\
    $~~~~\cup~ \{ (\unionk{B}{\notlab k}, \overline{s}) ~|~ (B, \overline{s}) \in T_2, k\not\in B \}$. \\
    And so \\
    $x = \{ (\emptyset,\overline{s}) ~|~ (B, \overline{s}) \in T_1, \notlab k\not\in B, \visible{\unionk{B}{k}}{L} \}$ \\
    $~~~~\cup~ \{ (\emptyset,\overline{s}) ~|~ (B, \overline{s}) \in T_2, k\not\in B, \visible{\unionk{B}{\notlab k}}{L} \}$.
    \begin{itemize}
      \item
        If $k\in L$,
        then $\nvisible{\unionk{B}{\notlab k}}{L}$ and \\
        $\visible{\unionk{B}{k}}{L} => \notlab k\not\in B$, and so \\
        $x = \{ (\emptyset, \overline{s}) ~|~ (B, \overline{s}) \in T_1, \visible{B}{L} \}$ \\
        $~~~= L(\jtable T_1)$, as required.
      \item
        If $k\not\in L$,
        %then $\unionk{B}{k}$ is not visible to $L$ \\
        %and $\unionk{B}{\notlab k}$ is visible to $L => k\not\in B$, and so \\
        %$x = \{ (\emptyset, \overline{s}) ~|~ (B, \overline{s}) \in T_2, B ~\mbox{is visible to}~ L \}$ \\
        %$~~~= L(\jtable T_2)$, as required.
        then this case holds by a similar argument as the previous case.
    \end{itemize}
\end{itemize}
\end{proof}

\section{Proof of Lemma~\ref{lem:projBranch}}
\label{sec:lemma_projbranch}
\begin{lem}[B]
\label{lem:projBranch}
%\begin{align*}
%L(\labResult{B}{V_1}{V_2}) &= \left\{ \begin{array}{lr}
%                              L(V_1) & \t{if } \visible{B}{L}\\
%                              L(V_2) & \t{if } \neg(\visible{B}{L})\\
%                              \end{array} \right.\\
%\end{align*}
\lemProjBranch
\end{lem}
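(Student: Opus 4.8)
The plan is to prove the statement by induction on the number of signed labels in $B$, collapsing one label at a time and appealing to the single-label case already established in \lemref{projLabel}. This is natural because the extension of $\labResult{\cdot}{\cdot}{\cdot}$ to branch sets is defined recursively by peeling off one element of $B$ at a time, so the induction can track that definition step for step.

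For the base case, take $B = \emptyset$. By definition $\labResult{\emptyset}{V_1}{V_2} = V_1$, and $\visible{\emptyset}{L}$ holds vacuously (there are no labels to check), so both sides of the claimed equation reduce to $L(V_1)$. For the inductive step I would split on the sign of the peeled-off label. Suppose first $B = \{k\} \cup B'$ with $k$ a positive label. The definition gives $\labResult{B}{V_1}{V_2} = \labResult{k}{\labResult{B'}{V_1}{V_2}}{V_2}$, and applying \lemref{projLabel} to the outer label shows that $L(\labResult{B}{V_1}{V_2})$ equals $L(\labResult{B'}{V_1}{V_2})$ when $k \in L$ and equals $L(V_2)$ when $k \notin L$. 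I then match this against $\visible{B}{L}$: when $\visible{B}{L}$ we have both $k \in L$ and $\visible{B'}{L}$, so the inductive hypothesis applied to $B'$ yields $L(V_1)$, as required. When $\neg(\visible{B}{L})$, non-visibility arises from one of two sources, and each is handled by \lemref{projLabel} together with the hypothesis: if $k \notin L$ then \lemref{projLabel} already gives $L(V_2)$ directly, while if $k \in L$ but $\neg(\visible{B'}{L})$ then \lemref{projLabel} reduces the expression to $L(\labResult{B'}{V_1}{V_2})$, which the inductive hypothesis evaluates to $L(V_2)$. The negative case $B = \{\notlab{k}\} \cup B'$ is entirely symmetric: the two facets swap roles in the definition, $\visible{B}{L}$ now forces $k \notin L$, and the same two-way split on the source of non-visibility closes the case.

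The main obstacle, and really the only point demanding care, is the case analysis in the inductive step under $\neg(\visible{B}{L})$: failure of visibility can originate either in the head label or in the residual set $B'$, and these must be separated so that \lemref{projLabel} is instantiated on the correct branch and the inductive hypothesis is invoked with the right premise. A secondary subtlety worth flagging is well-definedness: since $\labResult{B}{V_1}{V_2}$ is computed by peeling off an \emph{arbitrary} element of the set $B$, one should observe that the resulting value is independent of the peeling order. This observation also disposes of the degenerate case in which $B$ contains both $k$ and $\notlab{k}$; there $B$ is never visible to any $L$, and the computation collapses to $L(V_2)$ irrespective of order, in agreement with the $\neg(\visible{B}{L})$ branch of the statement.
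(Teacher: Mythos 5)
Your proof is correct and follows essentially the same route as the paper's: induction on the recursive peeling of $B$, applying \lemref{projLabel} to the head label and splitting on whether visibility fails at the head or in the residual set. If anything, you are more thorough than the paper, whose proof spells out only the positive-label case $B = B' \cup \{k\}$ and leaves the symmetric $\notlab{k}$ case (and the order-of-peeling remark) implicit.
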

\begin{proof}
  The proof is by induction and case analysis on the derivation of $L(\labResult{B}{V_1}{V_2})$.
  Let $x = L(\labResult{B}{V_1}{V_2})$.
  \begin{itemize}
    \item
      If $B = \emptyset$, then $\visible{B}{L}$, so $x = L(V_1)$ as required.
    \item
      Otherwise, $B = \unionk{B'}{k}$.
      \begin{itemize}
        \item
          If $\visible B L$, then \\
          $x = L(\labResult{k}{\labResult{B'}{V_1}{V_2}}{V_2})$ \\
          $~~~= L(\labResult{B'}{V_1}{V_2})$ by Lemma~\ref{lem:projLabel}, since $k\in L$ \\
          $~~~= L(V_1)$ by induction, as $\visible{B'}{L}$.
        \item
          Otherwise, $\nvisible{B}{L}$, then
          \begin{itemize}
            \item if $k\not\in L$, then $x = L(V_2)$ by Lemma~\ref{lem:projLabel}.
            \item otherwise $k\in L$, so $\nvisible{B'}{L}$. \\
              Therefore, $x = L(\labResult{B'}{V_1}{V_2}) = L(V_2)$, as required.
          \end{itemize}
      \end{itemize}
  \end{itemize}
\end{proof}

\section{Lemma~\ref{lem:invis}}
If a set of branches is compatible with view $L$, then we can execute only using that view. We prove an additional lemma that if $pc$ is not visible, then execution should not affect the environment under projections of $L$.
\begin{lem}[C]
\label{lem:invis}
If $pc$ is not visible to $L$ and
\begin{equation*}
\sstep {\Sigma} {e} {pc} {\Sigma'} {V}
\end{equation*}
then $L(\Sigma) = L(\Sigma')$.
%\lemInvis
\end{lem}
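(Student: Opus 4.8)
The plan is to prove the lemma by induction on the derivation of $\sstep{\Sigma}{e}{pc}{\Sigma'}{V}$, with the two workhorses being \lemref{projBranch} and the observation that non-visibility is \emph{monotone} in the branch set. Concretely, I would first record the following fact: if $pc$ is not visible to $L$, then neither is $pc \cup B$ for any branch set $B$. This holds because $\neg(\visible{pc}{L})$ means some branch in $pc$ is incompatible with $L$ (a positive $k$ with $k \notin L$, or a negative $\notlab{k}$ with $k \in L$), and that witness survives in the larger set $pc \cup B$. This fact is exactly what licenses the inductive hypothesis in every rule that extends the program counter. Note that the relation in question is expression evaluation, so the statement-level print and let rules are out of scope.

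The store is modified in only a handful of rules, so most cases are immediate. For \rel{f-val}, \rel{f-deref}, \rel{f-deref-null}, \rel{f-row}, \rel{f-select}, \rel{f-project}, \rel{f-join}, \rel{f-union}, and \rel{f-fold-empty} we have $\Sigma' = \Sigma$ and the claim is trivial. The genuinely store-mutating base cases are \rel{f-ref} and \rel{f-assign}, where $\Sigma' = \Sigma[a := \labResult{pc}{V}{\Sigma(a)}]$. Since $pc$ is not visible to $L$, \lemref{projBranch} gives $L(\labResult{pc}{V}{\Sigma(a)}) = L(\Sigma(a))$, so the updated cell projects to its old projection and $L(\Sigma) = L(\Sigma')$. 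For a freshly allocated address in \rel{f-ref} the old value is $0$, which is consistent with the convention (already used in \rel{f-deref-null}) that cells outside the domain project to $0$; thus the allocation is invisible to $L$.

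The recursive cases chain the inductive hypothesis. In \rel{f-ctxt}, \rel{f-app}, and \rel{f-strict} the sub-derivations carry the same $pc$, so the hypothesis applies directly and $L(\Sigma) = L(\Sigma')$ follows by transitivity across the intermediate stores. The interesting cases are \rel{f-split} and \rel{f-fold-consistent}, whose premises evaluate sub-expressions under $pc \cup \{k\}$, $pc \cup \{\notlab{k}\}$, or $pc \cup B$; here the monotonicity fact above guarantees these enlarged program counters remain invisible to $L$, so the hypothesis still applies to each premise and the chain of equalities closes. The rules \rel{f-left}, \rel{f-right}, and \rel{f-fold-inconsistent} preserve $pc$ and are handled like the other recursive cases.

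The subtlest cases are the label-management rules \rel{f-label} and \rel{f-restrict}. For \rel{f-label} the sub-derivation starts from $\Sigma$ extended with a fresh label $k' := \lam{x}{\true}$; appealing to the convention that undeclared labels carry the default policy $\lam{x}{\true}$, this extension does not change $L(\Sigma)$, and the inductive hypothesis finishes the case. For \rel{f-restrict} the update is $\Sigma'(k) = \Sigma_1(k) \policyAdd V_p$ with $V_p = \labResult{pc \cup \{k\}}{V}{\lam{x}{\true}}$; because $pc \cup \{k\}$ is not visible to $L$, \lemref{projBranch} gives $L(V_p) = \lam{x}{\true}$, and since conjoining the always-true policy is the unit for $\policyAdd$, we get $L(\Sigma_1(k) \policyAdd V_p) = L(\Sigma_1(k))$, leaving the projection unchanged. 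I expect this last point---that projection commutes with $\policyAdd$ and that $\lam{x}{\true}$ is its unit---together with the fresh-allocation convention, to be the main obstacle, since it is the only place where we reason about projection interacting with an operation other than $\labResult{\cdot}{\cdot}{\cdot}$; everything else reduces cleanly to \lemref{projBranch} and branch-set monotonicity.
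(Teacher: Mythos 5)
Your proposal is correct and takes essentially the same approach as the paper's proof: induction on the derivation, immediate dismissal of the store-preserving rules, monotonicity of invisibility under branch-set extension for \rel{f-split} and \rel{f-fold-consistent}, and the projection lemma for the store updates in \rel{f-ref} and \rel{f-assign} (your appeal to \lemref{projBranch} is in fact more precise than the paper's citation of \lemref{projLabel} there, since $\pc$ is a set of branches rather than a single label). The only divergence is that you also treat \rel{f-label} and \rel{f-restrict}, which the paper's proof omits entirely (it covers only the rules of Figures~\ref{fig:fvSem} and~\ref{fig:relOps}); that extra coverage is reasonable, though note your \rel{f-restrict} case delivers only a semantic equivalence of policies (treating $\lam{x}{\true}$ as a unit for $\policyAdd$), not the syntactic equality $L(\Sigma) = L(\Sigma')$ that the lemma literally asserts.
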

\begin{proof}
  By induction on the derivation of
  $\sstep {\Sigma} {e} {\pc} {\Sigma'} {V}$
  and by case analysis on the final rule used in that derivation.
  \begin{itemize}
    \item The following cases hold because $\Sigma = \Sigma'$: \rel{f-val}, \rel{f-deref-null}, \rel{f-deref}, \rel{f-ctxt},
      \rel{f-app}, \rel{f-left}, \rel{f-right}, \rel{f-row}, \rel{f-select},
      \rel{f-project}, \rel{f-join}, \rel{f-union}, \rel{f-fold-empty}, and \rel{f-fold-inconsistent}.
    \item Cases \rel{f-app}, \rel{f-left}, \rel{f-right}, \rel{f-strict}, \rel{f-ctxt},
      and \rel{f-fold-inconsistent} hold by induction.
    \item For case \rel{f-split}, we note that since $\visible{\pc}{L}$,
      $\forall k.\nvisible{\unionk{\pc}{k}}{L}$ and $\nvisible{\unionk{\pc}{\notlab k}}{L}$.
      Therefore, this case also holds by induction.
    \item Similarly, for case \rel{f-fold-consistent}, since $\nvisible{\pc}{L}$, \\
      $\forall B.\nvisible{\union{\pc}{B}}{L}$,
      and so this case holds by induction.
    \item For case \rel{f-ref},
      $\forall a'$ where $a'\not=a, \Sigma(a') = \Sigma'(a')$. \\
      Since $\nvisible{\pc}{L}, L(\Sigma(a)) = 0$ by Lemma~\ref{lem:projLabel},
      as required.
      %(since $\forall \Sigma'',a''$ where $a''\not\in\fun{domain}(\Sigma''). \Sigma''(a'') = 0$).
    \item For case \rel{f-assign},
      $\forall a'$ where $a'\not=a, \Sigma(a') = \Sigma'(a')$. \\
      Since $\nvisible{\pc}{L}, L(\Sigma(a)) = L(\Sigma'(a))$ by Lemma~\ref{lem:projLabel},
      as required.
  \end{itemize}
\end{proof}
\noindent This lemma is also useful in the proof of the Projection Theorem.

\section{Proof of Theorem~\ref{thm:projection} (Projection)}
\label{sec:projection}
For convenience, we restate Theorem~\ref{thm:projection}. \\
{\it
\thmProjection
}
\noindent 
The proof extends $L$ to project evaluation contexts. They may project away the hole and so map evaluation contexts to expressions,
in which case filling the result is a no-op.

We capture in the following lemma the property that if a branch $B$ is inconsistent with the program counter $\pc$,
at most one of $B$ and $\pc$ may be visible to any given view $L$.
\begin{lem}
  \label{lem:inconBranches}
  If $B$ is inconsistent with $\pc$ and $\visible \pc L$,
  then $\nvisible{B}{L}$.
\end{lem}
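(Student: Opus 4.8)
The plan is to unfold the two relevant definitions and then do a short case analysis on the witness of inconsistency. First I would pin down what ``$B$ inconsistent with $\pc$'' means, since the excerpt uses this notion in the fold rules without an explicit display: two branch sets are inconsistent exactly when they disagree on some label, i.e.\ there is a label $k$ with either $k \in B$ and $\notlab{k} \in \pc$, or $\notlab{k} \in B$ and $k \in \pc$. (Consistency is the absence of such a conflict.) Recall also the definition of visibility: $\visible{B}{L}$ holds iff every $k \in B$ satisfies $k \in L$ and every $\notlab{k} \in B$ satisfies $k \notin L$.

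With these in hand the argument is immediate. From $B$ inconsistent with $\pc$ I obtain a conflicting label $k$. In the first case, $k \in B$ while $\notlab{k} \in \pc$; since $\visible{\pc}{L}$, the branch $\notlab{k} \in \pc$ forces $k \notin L$, which is incompatible with the visibility requirement $k \in L$ demanded by $k \in B$, so $\nvisible{B}{L}$. In the second case, $\notlab{k} \in B$ while $k \in \pc$; since $\visible{\pc}{L}$, the branch $k \in \pc$ forces $k \in L$, contradicting the requirement $k \notin L$ demanded by $\notlab{k} \in B$, so again $\nvisible{B}{L}$. In both cases the single conflicting label already obstructs visibility of $B$, which is all we need.

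I do not expect any real obstacle here: the lemma is essentially a one-step unfolding once the definition of inconsistency is made explicit, and the only thing to be careful about is correctly transporting the polarity of the conflicting branch from $\pc$ (via $\visible{\pc}{L}$) to a membership fact about $k$ in $L$, and then observing that $B$ demands the opposite polarity. The mild subtlety worth flagging is that the witness $k$ need only be a single conflicting label --- visibility of a branch set fails as soon as one of its branches is violated --- so there is no need to reason about all of $B$ or all of $\pc$ simultaneously. This lemma is exactly the form needed downstream: it is what justifies the \rel{f-fold-inconsistent} rule dropping a row in the Projection Theorem, since an inconsistent row is invisible to any view that sees the current $\pc$.
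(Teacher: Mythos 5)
Your proof is correct; in fact, the paper states this lemma without any proof at all, treating it as immediate, and your one-step unfolding argument---making the definition of inconsistency explicit as a polarity conflict on a single label $k$, transporting that label's polarity through $\visible{\pc}{L}$, and observing that $B$ demands the opposite polarity---is exactly the justification the paper implicitly relies on. Your closing remark about why the lemma is needed (to justify \rel{f-fold-inconsistent} dropping rows in the Projection Theorem) also matches the paper's stated motivation.
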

\noindent With these properties established, we now prove projection.
\begin{proof}
By induction on the derivation of
  $\sstep {L(\Sigma)} {L(e)} {\emptyset} {L(\Sigma')} {L(V)}$
  and by case analysis on the final rule used in that derivation.
  \begin{itemize}
    \item The following cases hold trivially: \rel{f-val}, \rel{f-deref}, \rel{f-deref-null}, \rel{f-row}, \rel{f-project}, and \rel{f-union}.

    \item For case~\rel{f-select}, $e = \select{i}{j}{(\jtable{T})}$, so
      \[
        \sstep \Sigma {\select{i}{j}{(\jtable{T})}} {pc} \Sigma {(\jtable{T'})}
      \]
      where $T' = \{ (B,\overline{s}) ~|~ s_i=s_j \}$. \\
      This case holds
      since $L(\jtable T) = \{ (\emptyset, \overline{s}) ~|~ (B, \overline{s}) \in T, \visible{B}{L} \}$
      and $L(\jtable T') = \{ (\emptyset, \overline{s}) ~|~ (B, \overline{s}) \in T, \visible{B}{L}, s_i=s_j \}$, \\

    \item For case~\rel{f-join}, $e = \join{(\jtable{T_1})}{(\jtable{T_2})}$, so
      \[
        \sstep \Sigma {\join{(\jtable{T_1})}{(\jtable{T_2})}} {pc} \Sigma {(\jtable{T})}
      \]
      where $T = \{ B.B',\overline{s}.\overline{s'})
        ~|~ (B, \overline{s}) \in T_1, (B', \overline{s'}) \in T_2 \}$. \\
      $L(T) = \{ (B.B', \overline{s}.\overline{s'})
        ~|~ (B,\overline{s}) \in T_1, (B', \overline{s'}) \in T_2, \visible{B.B'}{L} \}$,
        so this case holds.

    \item For case~\rel{f-ctxt}, $e = E[e']$.
        By the antecedents of this rule
        %$\sstep {\Sigma} {\ctxteval{E}{e}} {pc} {\Sigma'} {V}$
        \[
          \begin{array}{l}
            E \neq [] \\
            e' \mbox{ not a value} \\
            \sstep \Sigma {e'} {pc} {\Sigma_1} {V'} \\
            \sstep {\Sigma_1} {\ctxteval{E}{V'}} {pc} {\Sigma'} {V} \\
          \end{array}
        \]
        Note that $L(E[V']) = L(E)[L(V')]$, etc., so by induction
        \[
          \begin{array}{l}
            \sstep {L(\Sigma)} {L(e')} {\emptyset} {L(\Sigma_1)} {L(V')} \\
            \sstep {L(\Sigma_1)} {\ctxteval{L(E)}{L(V')}} {\emptyset} {L(\Sigma')} {L(V)} \\
          \end{array}
        \]
        Therefore,
        $\sstep {L(\Sigma)} {L(\ctxteval{E}{e})} {\emptyset} {L(\Sigma')} {L(V)}$, as required.

    \item For case \rel{f-strict}, $e = \ctxteval{S}{\labval{k}{V_1}{V_2}}$.
        By the antecedents of this rule
        \[
          \sstep \Sigma {\labval{k}{\ctxteval{S}{V_1}}{\ctxteval{S}{V_2}}} {pc} {\Sigma'} {V'}
        \]
        We now consider each possible case for the next step in the derivation.
        \begin{itemize}
          \item For subcase~\rel{f-left}, we know that $k\in\pc, k\in L$ and
            \[
              \sstep \Sigma {\ctxteval{S}{V_1}} {\emptyset} {\Sigma'} {V}
            \]
            By induction,
          $\sstep {L(\Sigma)} {L(\labval{k}{\ctxteval{S}{V_1}}{\ctxteval{S}{V_2}})} {\emptyset} {L(\Sigma')} {L(V')}$. \\
          \item Subcase~\rel{f-right} holds by a similar argument.
          \item For subcase~\rel{f-split}, $k\not\in\pc, \notlab k\not\in\pc$ and
            \[
              \begin{array}{l}
                \sstep{\Sigma} {\ctxteval{S}{V_1}} {\unionk{\pc}{k}} {\Sigma''} {V''} \\
                \sstep{\Sigma''} {\ctxteval{S}{V_2}} {\unionk{\pc}{\notlab k}} {\Sigma'} {V'''} \\
                V = \labResult{k}{V''}{V'''}
              \end{array}
            \]
            \begin{itemize}
              \item
                If $k\in L$, then by induction we have
                $\sstep{L(\Sigma)} {L(\ctxteval{S}{V_1})} {\emptyset} {L(\Sigma'')} {L(V'')}$. \\
                $L(\Sigma'') = L(\Sigma')$ by Lemma~\ref{lem:invis},
                and $L(V) = L(V'')$. \\
                Therefore,
                $\sstep{L(\Sigma)} {L(\ctxteval{S}{V_1})} {\emptyset} {L(\Sigma')} {L(V')}$,
                as required.
              \item
                If $k\not\in L$, then this case holds by a similar argument.
            \end{itemize}
        \end{itemize}

    \item For case \rel{f-fold-empty}, we have
        $$\sstep {\Sigma} {\fold{V_f}{V_b}{(\jtable{\epsilon})}} {\pc} {\Sigma} {V_b}$$
        Clearly,
        $\sstep {L(\Sigma)} {\fold{L(V_f)}{L(V_b)}{L(\jtable{\epsilon})}} {\emptyset} {L(\Sigma)} {L(V_b)}$. \\

    \item For case \rel{f-fold-inconsistent}, we have
      $$e = {\fold{V_f}{V_p}{(\jtable{(B, \jrow{s}).T})}}$$.
        By the antecedents of this rule, we have
        $$
        \begin{array}{c}
          \sstep {\Sigma} {\fold{V_f}{V_b}{(\jtable{T})}} {\pc} {\Sigma'} {V} \\
          B \t{ is inconsistent with } \pc \\
          %\sstep {\Sigma} {\fold{V_f}{V_p}{(\jtable{(B, \jrow{s}).T})}} {pc} {\Sigma'} {V}
        \end{array}
        $$
        By Lemma~\ref{lem:inconBranches}, $\nvisible{B}{L}$. \\
        Therefore, $L(\jtable{(B, \overline{s}).T}) = L(\jtable T)$. \\
        By the \rel{f-fold-empty} rule,
        \[
          \sstep {L(\Sigma)} {\fold{L(V_f)}{L(V_b)}{L(\jtable{(B, \overline{s}).T})}} {\emptyset} {L(\Sigma')} {L(V)}
        \]
        By induction,
        $\sstep {L(\Sigma)} {L(\fold{V_f}{V_b}{(\jtable{T})})} {\emptyset} {L(\Sigma')} {L(V)}$,
        as required.

      \item
        For case \rel{f-fold-consistent}, $e = \fold{V_f}{V_b}{(\jtable{T})}$. \\
        By the antecedents of this rule, we have
        \[
          \begin{array}{c}
            \sstep {\Sigma} {\fold{V_f}{V_b}{(\jtable{T})}} {\pc} {\Sigma_1} {V_1} \\
            B \t{ is consistent with } \pc \\
            \sstep {\Sigma_1} {V_f~\jrow{s}~V_1} {pc \cup B} {\Sigma'} {V_2} \\
            V = \labResult{B}{V_2}{V_1}
          \end{array}
        \]
        \begin{itemize}
          \item If $\visible B L$, then $\visible{\union{\pc}{B}}{L}$. \\
            By induction, \[
              \begin{array}{c}
                \sstep {L(\Sigma)} {L(\fold{V_f}{V_b}{(\jtable{T})})} {\emptyset} {L(\Sigma_1)} {L(V_1)} \\
                \sstep {L(\Sigma_1)} {L(V_f~\jrow{s}~V_1)} {\emptyset} {L(\Sigma')} {L(V_2)} \\
              \end{array}
            \]
            By Lemma~\ref{lem:projBranch}, $L(V) = L(\labResult{B}{V_2}{V_1})$, as required.
          \item Otherwise, $\nvisible{B}{L}$, and therefore $\nvisible{\union{\pc}{B}}{L}$.
            By Lemma~\ref{lem:invis}, $L(\Sigma_1) = L(\Sigma')$. \\
            We have
            $\sstep {L(\Sigma)} {L(\fold{V_f}{V_b}{(\jtable{T})})} {\emptyset} {L(\Sigma_1)} {L(V_1)}$ by induction. \\
            $L(\jtable{(B,\overline{s}).T}) = L(\jtable{T})$. \\
            By Lemma~\ref{lem:projBranch}, $L(V) = L(\labResult{B}{V_2}{V_1})$, as required.
        \end{itemize}

      \item
        For case~\rel{f-left}, $e = \labexpr{k}{e_1}{e_2}$. \\
        By the antecedents of this rule, we have
        \[
          \begin{array}{c}
            k \in\pc \\
            \sstep \Sigma {e_1} \pc {\Sigma'} {V} \\
          \end{array}
        \]
        Since $k\in\pc$, $L(e) = L(e_1)$. \\
        By induction, $\sstep {L(\Sigma)} {L(e_1)} \emptyset {L(\Sigma')} {L(V)}$. \\

      \item Case~\rel{f-right} holds by a similar argument.

      \item For case~\rel{f-split}, $e = \labexpr{k}{e_1}{e_2}$. \\
        By the antecedents of this rule, we have
        \[
          \begin{array}{c}
            k \not\in\pc \qquad \notlab k \not\in\pc \\
            \sstep \Sigma {e_1} {\unionk \pc k} {\Sigma_1} {V_1} \\
            \sstep {\Sigma_1} {e_2} {\unionk \pc {\notlab k}} {\Sigma'} {V_2} \\
            V = \labResult{k}{V_1}{V_2}
          \end{array}
        \]
        \begin{itemize}
          \item If $k\in L$, then by induction
            $\sstep {L(\Sigma)} {L(e_1)} \emptyset {L(\Sigma_1)} {L(V_1)}$. \\
            $L(\Sigma_1) = L(\Sigma')$ by Lemma~\ref{lem:invis},
            and by Lemma~\ref{lem:projLabel} \\
            $L(V) = L(\labResult{k}{V_1}{V_2}) = L(V_1)$, as required.
          \item Otherwise $\notlab k\in L$, so
            $L(\Sigma) = L(\Sigma_1)$ by Lemma~\ref{lem:invis}. \\
            By induction, $\sstep {L(\Sigma_1)} {L(e_2)} \emptyset {L(\Sigma')} {L(V_2)}$, \\
            and by Lemma~\ref{lem:projLabel}
            $L(V) = L(\labResult{k}{V_1}{V_2}) = L(V_2)$, as required.
        \end{itemize}

      \item For case~\rel{f-app}, $e = (\lam{x}{e'}~V')$.
        By the rule antecedents,
        \[
            \sstep \Sigma {e'[x:=V']} \pc {\Sigma'} {V} \\
        \]
        We know that $L(e) = L(\lam x {e'}~V') = L(e'[x:=V'])$. \\
        By induction,
        $\sstep {L(\Sigma)} {L({e'[x:=V']})} \emptyset {L(\Sigma')} {L(V)}$, as required.

      \item For case~\rel{f-ref}, $e = \mkref V'$.
        By the rule antecedents
        \[
          \begin{array}{c}
            a \not\in \fun{dom}(\Sigma) \\
            \Sigma' = \Sigma[a := \labResult{\pc}{V'}{0}] \\
          \end{array}
        \]
        Without loss of generality, we assume that both evaluations
        allocate the same address $a$.
        Since $a\not\in\fun{dom}(\Sigma), a\not\in\fun{dom}(L(\Sigma))$. \\
        Also, we know that $\forall a'\in\fun{dom}(\Sigma), \Sigma(a') = \Sigma'(a')$,
        and therefore $L(\Sigma(a')) = L(\Sigma'(a'))$. \\
        Since $\visible{\pc}{L}$, $L(\Sigma'(a)) = L(\labResult{\pc}{V'}{0}) = L(V')$
        by Lemma~\ref{lem:projBranch}.
        Since $L(\labResult{\emptyset}{V'}{0}) = L(V') = L(V)$,
        this case holds.

      \item For case~\rel{f-assign}, $e = (\assign{a}{V})$.
        By the antecedent of this rule,
        $\Sigma' = \Sigma[a := \labResult{\pc}{V}{\Sigma(a)}]$.
        We know $\forall a'\in\fun{dom}(\Sigma), \Sigma(a') = \Sigma'(a')$,
        and therefore $L(\Sigma(a')) = L(\Sigma'(a'))$. \\
        Since $\visible L \pc$,
        $L(\Sigma'(a)) = L(\labResult{\pc}{V}{\Sigma(a)}) = L(V)$
        by Lemma~\ref{lem:projBranch}.
        And since $L(\labResult{\emptyset}{V}{\Sigma(a)}) = L(V)$,
        this case holds.
  \end{itemize}
\end{proof}

\end{document}